\newtheorem{theorem}{Theorem}[section]
\newtheorem{lemma}[theorem]{Lemma}
\theoremstyle{definition}
\newtheorem{definition}[theorem]{Definition}
\newtheorem{example}[theorem]{Example}
\newtheorem{assumption}[theorem]{Assumption}
\newtheorem{algorithm}[theorem]{Algorithm}
\theoremstyle{remark}
\newtheorem{remark}[theorem]{Remark}
\numberwithin{equation}{section}
\def\N{{\mathbb N}}
\def\R{{\mathbb R}}
\def\F{{\mathbb {F}}}
\def\X{{\mathbb{X}}}
\def\HK{\hbox{\rm{HK}}}
\def\res{\hbox{\rm{res}}}
\def\LC{\hbox{\rm{LC}}}
\def\MoCont{\hbox{\rm{MoCont}}}
\def\MoPrim{\hbox{\rm{MoPrim}}}
\def\GCD{\hbox{\rm{GCD}}}
\def\s{{\mathbf{s}}}
\def\PC{{\mathcal{C}}}
\def\DC{{\mathcal{D}}}
\def\argmax{{\hbox{\rm{argmax}}}}
\begin{document}

\title{Bit Complexity of Polynomial GCD on Sparse Representation}

\author{Qiao-Long Huang}
\address{School of Mathematics, Shandong University}
\email{huangqiaolong@sdu.edu.cn}
\thanks{The first author was supported in part by NSFC Grant No.12001321.}

\author{Xiao-Shan Gao}
\address{UCAS, Academy of Mathematics and Systems Science,
 Chinese Academy of Sciences}
\email{xgao@mmrc.iss.ac.cn}
\thanks{The second author was supported in part by NSFC Grant No. 12288201.}

\subjclass[1991]{Primary 68W30; Secondary 68Q25, 68Q20 }



\keywords{Multivariate polynomial GCD, bit complexity, sparse interpolation, finite field}

\begin{abstract}
An input- and output-sensitive GCD algorithm for multi-variate polynomials over finite fields is proposed by combining the modular method with the Ben-Or/Tiwari sparse interpolation. 
%
The bit complexity of the algorithm is given and is sensitive to the sparse representation, while for previous sparse GCD algorithms, the complexities were given only in some special cases.
It is shown that the new algorithm is superior both in theory and in practice
comparing with existing GCD algorithms: the complexity in the degree
is decreased from quadratic to linear and the running times
are decreased by 1-3 orders of magnitude in various benchmarks.
\end{abstract}

\maketitle

\section{Introduction}

%


Multivariate polynomial GCD computation is one of the central problems in algebraic
and symbolic computation.
In 1967, Collins \cite{Collins67} gave the first major advance by proposing
a refined Euclidean style algorithm.
Such direct computational algorithms lack scalability due to the so-called intermediate expression swell phenomenon.
The effective idea to solve the intermediate expression swell problem is modular algorithms,
that is, by substituting some of the variables by certain integers,
multivariate polynomial GCD computation becomes univariate GCD computation over finite fields,
and the true GCD will be recovered from these univariate GCDs either by
interpolations or by the Chinese Remainder Theorem.
In 1971, Brown \cite{brown1971euclid} gave the first modular algorithm based on interpolation for dense polynomials.

The above algorithms are for dense polynomials.
In 1973, Moses and Yun proposed the EZ-GCD algorithm \cite{MosesY73}, where  Hensel lifting instead of interpolation was used to recover the GCD.
%
%
%
In 1980, Wang \cite{Wang80b} proposed an enhanced EZ-GCD algorithm, called EEZ-GCD algorithm, which improved the EZ-GCD algorithm by solving the leading coefficient problem, bad-zero problem, unlcuky evaluation problem, and the common divisor problem.

In 1979, Zippel \cite{Zippel79} developed the first modular sparse GCD algorithm based on sparse polynomial interpolations, which interpolates the GCD one variable at a time.
Zippel's algorithm is probabilistic and its correctness relies on the  Schwartz-Zippel lemma.
In 1988, Kaltofen \cite{kaltofen1988greatest} gave a GCD algorithm for polynomials given by straight-line programs.
%
In 1990, Kaltofen and Trager \cite{kaltofen1990computing} gave a GCD algorithm for polynomials given by black boxes.
In 2008, Cuyt and Lee \cite{cuyt2008new} proposed another improved technique. For one evaluation, their algorithm reduces a multivariate polynomial into a univariate polynomial.
In 2016, Hu and Monagan \cite{hu2016fast} presented a parallel GCD algorithm for sparse GCD computation, which combined a Kronecker substitution with a Ben-Or/Tiwari sparse interpolation~\cite{Ben-OrT88} modulo a smooth prime to determine the support of the GCD.
In 2018, Tang, Li, and Zeng \cite{tang2018computing} proposed
two methods based on variations of Zippel's method and Ben-Or/Tiwari's interpolation algorithm ~\cite{Ben-OrT88}, respectively.

Despite of the vast literatures, explicit bit complexities for sparse multivariate polynomial GCD algorithms
seems not given. In previous work, the bit complexities were either mentioned to be polynomial
in the number of variables, degrees, and the number of terms of the input polynomials
or given under certain conditions.
In this paper, we will give a new GCD algorithm and its bit complexity,
which is sensitive for the sparse representation.
The given algorithm is shown to have better complexities
and much better practical performance than existing algorithms.

\subsection{Main results}
Let $\F_q$ be a finite field with $q$ elements, where $q$ is a prime or prime power.
In this paper, we focus on GCD computation over finite fields. Let $A$ and $B$ be two polynomials in $\F_q[x_1,\dots,x_n]$ and $G=\gcd(A,B)$.
In the following, $T_A,T_B$, and $T_G$   are respectively the numbers of terms in $A,B$ and $G$.
$D$ (and $d$) is the degree (and partial degree) bound of $A$ and $B$.
The algorithm is randomized, so we assume that we can obtain a random bit with bit-cost $O(1)$.
The main result of the paper is given below.

\begin{theorem}
Let $A,B$ be in $\F_q[x_1, \dots,x_n]$, and suppose that a primitive root $\omega$ of $\F_q$ is given.
For any $\varepsilon\in(0,1)$, there exists an algorithm that takes as inputs $A,B$ and
returns $G=\gcd(A,B)$ with probability at least $1-\varepsilon$ using $O^\sim(nDT_G(T_A+T_B)\log^2\frac{1}{\varepsilon}\log^2 q)$ bit operations.
\end{theorem}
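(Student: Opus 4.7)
The plan is to combine a modular reduction with Ben-Or/Tiwari sparse interpolation. Treating $x_n$ as the main variable, write $A,B\in\F_q[x_1,\dots,x_{n-1}][x_n]$, so that $G$ becomes a polynomial in $x_n$ whose coefficients are sparse in $x_1,\dots,x_{n-1}$ and whose combined support has size at most $T_G$. To interpolate all of these coefficients simultaneously, draw a random exponent tuple $(e_1,\dots,e_{n-1})$ and use the Ben-Or/Tiwari geometric progression $\alpha_k=(\omega^{e_1 k},\dots,\omega^{e_{n-1} k})$ for $k=0,1,\dots,2T_G-1$. For each $k$, substitute $\alpha_k$ to obtain univariate $A_k,B_k\in\F_q[x_n]$, compute $G_k(x_n)=\gcd(A_k,B_k)$ by a fast univariate GCD, and collect the sequence of coefficient samples. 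Since $T_G$ is unknown a priori, run Berlekamp-Massey on the emerging stream and double the number of samples until the recurrence stabilises and a final evaluation check succeeds. Once the recurrence is found, Prony's method recovers the distinct monomial values $\beta_j=\prod_i\omega^{e_i a_i^{(j)}}$ together with their coefficients, and a bounded discrete logarithm in $\langle\omega\rangle$ decodes each $\beta_j$ into the multi-exponent $(a_1^{(j)},\dots,a_{n-1}^{(j)})$.

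The correctness argument will bound, via Schwartz-Zippel-type estimates and standard facts about Vandermonde-like Prony matrices, the probability of three bad events: (i) an evaluation point is unlucky, so that $\deg_{x_n}\gcd(A_k,B_k)$ exceeds $\deg_{x_n}G$ or a leading coefficient vanishes; (ii) two distinct monomials of $G$ collide under the random substitution, i.e.\ they produce the same $\beta_j$; (iii) the coefficient matrix of the Prony system is singular, so the sparse interpolation cannot recover the true support. Each of these probabilities is of the form $\mathrm{poly}(n,D,T_G)/q$ when the exponents $e_i$ are drawn from a sufficiently large range and $\omega$ generates $\F_q^*$; an outer loop of $O(\log(1/\varepsilon))$ independent trials, combined with a consistency check on the recovered $G$, then forces the total failure probability below $\varepsilon$.

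For the complexity, each evaluation of $A$ at $\alpha_k$ costs $O(T_A)$ field operations once the per-monomial products $\beta_j^{(A)}=\prod_i\omega^{e_i a_i^{(j)}}$ are cached and updated by the rolling multiplication $\beta_j^{k+1}=\beta_j^k\cdot\beta_j$; over $2T_G$ shots this totals $O^\sim(T_G(T_A+T_B))$ operations in $\F_q$. Each univariate GCD in $\F_q[x_n]$ of degree at most $D$ costs $O^\sim(D)$ field operations, so all $O(T_G)$ univariate GCDs contribute $O^\sim(DT_G)$. Berlekamp-Massey decoding and root finding add $O^\sim(T_G^2)$, which is absorbed into the main term. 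Exponent extraction involves coordinate decoding of each recovered $\beta_j$ across the $n$ variables plus a discrete logarithm of cost $O^\sim(\log q)$ field operations per monomial, contributing $O^\sim(nT_G\log q)$. Every $\F_q$-operation costs $O^\sim(\log q)$ bit operations; multiplying these estimates and including the $O(\log^2\frac{1}{\varepsilon})$ boosting factor yields the target bound $O^\sim(nDT_G(T_A+T_B)\log^2\frac{1}{\varepsilon}\log^2 q)$.

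The main obstacle is expected to be the coupled correctness analysis: at each of the $2T_G$ evaluations the univariate image $G_k$ must agree with the true image of $G$, \emph{and} the entire Ben-Or/Tiwari recovery must succeed globally. A naive union bound over independent samples would inflate the required size of $\F_q$ and add an unwanted $T_G$ factor to the complexity; the sharper argument must exploit that the whole family of evaluation points is determined by the single random tuple $(e_1,\dots,e_{n-1})$, and the smooth-order properties inherited from $\omega$ being a primitive root must be used to keep the discrete-logarithm step at $O^\sim(\log q)$ field operations rather than incurring a $\sqrt{q}$ blow-up.
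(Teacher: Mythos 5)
Your proposal is recognisably in the same family as the paper's algorithm (modular images plus Ben-Or/Tiwari), but two of the steps you sketch contain genuine gaps that the paper expends most of its machinery on, and as written your version would not run.

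First, the \emph{leading coefficient normalisation} is missing. You take $x_n$ as the main variable, compute the monic univariate $\gcd(A_k,B_k)$ at each point $\alpha_k$, and then feed its coefficients into a Ben-Or/Tiwari recovery. But the monic GCD at $\alpha_k$ equals $G(\alpha_k,x_n)$ divided by the \emph{evaluation of the leading coefficient} $g_\ell(\alpha_k)$, where $g_\ell(x_1,\dots,x_{n-1})$ is the (unknown, in general multi-term) leading coefficient of $G$ in $x_n$. Thus the $k$-th sample of the coefficient of $x_n^{e_i}$ is $g_i(\alpha_k)/g_\ell(\alpha_k)$, a rational function of $\alpha_k$, not an evaluation of the sparse polynomial $g_i$. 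Ben-Or/Tiwari and Berlekamp--Massey have nothing to interpolate here unless $g_\ell$ is known. This is precisely the leading coefficient problem that the paper solves with the $\mathbf{s}$-substitution in Theorem~\ref{the-4}: after replacing $x_i\mapsto x_iy^{s_i}$ for a random $\mathbf{s}$, the leading coefficient in $y$ becomes a \emph{monomial}, so it can be cleared by multiplying every univariate GCD by $(\alpha_1\cdots\alpha_n)^{id}$, making the samples genuine polynomial evaluations (Equation~(\ref{eq-H3})). Without some surrogate for this step your scheme only works when $\LC_{x_n}(G)$ is already a monomial.

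Second, the \emph{discrete logarithm cost is not} $O^\sim(\log q)$. You set $\alpha_i=\omega^{e_i}$ and recover $\beta_j=\omega^{\sum_i e_i a_i^{(j)}}$, so decoding requires a discrete logarithm modulo $q-1$ of an exponent that can be essentially arbitrary in $[0,q-2]$. Having a primitive root $\omega$ does not make $q-1$ smooth, so nothing better than generic $O(\sqrt{q})$ methods is available; this is exactly the ``$\sqrt{q}$ blow-up'' you yourself flag as the main obstacle, and it is not removable from within your scheme. The paper sidesteps it entirely: rather than a single Kronecker-like logarithm it recovers each exponent component $a^{(j)}_k$ separately by taking ratios of evaluations at the extra points $\overrightarrow{\alpha}_k^i$ (one coordinate multiplied by $\omega$, see Definition~\ref{def-3} and Algorithm~\ref{alg-2}); the resulting value is $\omega^{a^{(j)}_k}$ with $a^{(j)}_k\le d$, so only a bounded-range discrete log of size $d$ is needed, giving the $nT\sqrt{d}\log q$ term in Theorem~\ref{the-2}, which is dominated by $nDT$. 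The paper explicitly contrasts itself with the Kronecker-plus-smooth-prime route of Hu--Monagan for exactly this reason. Finally, you also omit the diversification step (Giesbrecht--Roche, Section~\ref{sec-dp}), which the paper's finite-field Ben-Or/Tiwari variant requires to make the coefficient recovery well-posed; this is easy to add but is a further ingredient your sketch does not account for.
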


Our algorithm may fail to find the correct number $T_G$, which leads to an endless run. At this point, we force quit when the wrong $T_G$ reaches $(d+1)^n$. But luckily, this case only happens with probability $\leq \varepsilon$. So if we choose $\varepsilon$ small enough, for example $\varepsilon=\frac{1}{nD(d+1)^n(T_A+T_B) \log^2 q}$,
then the expected complexity is $O^\sim(n^3DT_G(T_A+T_B)\log^2 q)$ bit operations.

The algorithm is implemented in Maple and extensive numerical experiments
show that the new algorithm outperforms the default GCD in Maple by 1-3 orders of magnitudes
as shown by Table \ref{tab-10}. Details of the experiments can be found in section \ref{sec-exp}.

\begin{table}[ht]
\centering
\begin{tabular}{ccc}
  \hline
%
Experiment Settings & Maple GCD & our GCD\\
 \hline
$n=6$, $D=30$, $t$ varies, $t_l\le 60$s &
$t_m \approx 18$ &
$t_m \approx 150$  \\
$t=30$, $D=100$, $n$ varies, $t_l\le 60$s &
$n_m \approx 3$ &
$n_m \approx 200$  \\
$n=6$, $t=30$, $D$ varies, $t_l\le 100$s &
$D_m \approx 23$ &
$D_m \approx 29525$  \\
\hline
\end{tabular}
\caption{Experimental results for computing $G=\GCD(A,B)$,
where
$D=\deg A=\deg B=\deg G$,
$t=\# A=\# B=\# G$,
$t_l$ is the running time threshold in seconds.
We use $t_m$, $n_m$, $D_m$ to denote the maximum
terms, numbers of variables, degrees that can be computed within the 
given time threshold $t_l$.   }
\label{tab-10}
\end{table}

\vskip-20pt\,
At top level, the algorithm is a combination of the modular method with the Ben-Or/Tiwari sparse interpolation~\cite{Ben-OrT88}.
Main ingredients of the algorithm include:
a new variable substitution is introduced  to isolate the leading coefficient of the GCD, that is, the leading coefficient of substituted GCD is a monomial;
the concept of diverse polynomials introduced by Giesbrecht and Roche \cite{giesbrechtD11}
is modified to give a Ben-Or/Tiwari sparse interpolation algorithm over finite fields;
the early termination introduced by  Kaltofen and Lee \cite{KaltofenL03} is used to estimate the
terms bound for the coefficients of the substituted GCD;
a new type of good points is introduced to recover the GCD from its modular images
by using only primes with small sizes.
Combination of these ingredients leads to the lower binary complexity 
and the practical efficiency of the algorithm.

\subsection{Related work and comparison}

The EZ-GCD~\cite{MosesY73} appears to have a computing bound which in most
cases is a polynomial function of $T$ and $n$. But in some cases, the complexity is increased, for example,
when the number of terms in the expanded series form
of $B(x_1,x_2-b_2,\dots,x_n-b_n)$ has larger order than that   in $B(x_1,x_2,\dots,x_n)$ for some $(b_2,\dots,b_n)$ or when the input polynomials are not monic with respect to any variable.
%
In Zippel's algorithm \cite{Zippel79}, $O(ndT)$ images of the GCDs are needed, while our algorithm only need $O(nT)$ images and has a better complexity. In Table \ref{tab-1}, we list the complexities for the GCD algorithms compared with Zippel algorithm. Here we assume the probability of failure $\varepsilon$ is fixed.
%
The complexity is analysed by the authors of this paper.


Zippel's algorithm was originally designed for GCDs
which are monic w.r.t. the main variable.
The complexity of Zippel's algorithm  is also sensitive to the sparse representation.
%
The main advantage of our algorithm is that its complexity is linear in $D$,
while the complexity of Zippel's algorithm is quadratic in $d$.

\begin{table}[ht]
\footnotesize
\centering
\begin{tabular}{ccc}
 \hline
Algorithms &Bit complexity & Condition\\ \cline{1-3}
Zippel \cite{Zippel79}& $nd^2T\log q+ndT(T_A+T_B)\log d\log q$ & Monic GCD\\
This paper& $nDT(T_A+T_B)\log^2 q$  & All cases\\
 \hline
\end{tabular}
\caption{A ``soft-Oh'' complexity comparison  for GCD algorithms over $\F_q[x_1,\dots,x_n]$.
$T$ is the number of terms of the GCD.
Condition means under what condition the result is valid.}
\label{tab-1}
\end{table}

In the EEZ-GCD algorithm \cite{Wang80b}, a factorization of the GCD of the leading coefficient is computed. However, this step may lead to high complexity, because the number of terms of the factors may be very large.
Our algorithm predetermines the leading coefficient of the GCD by isolating the maximum term instead of factorization, which was similar to the method introduced by Cuyt and Lee \cite{cuyt2008new} and had controllable complexity.
Furthermore, inspired by the work of Klivans and Spielman \cite{klivans2001randomness},
we introduce a new variable substitution such that the leading coefficient of the substituted GCD is a monomial, which greatly decreases the computation cost.
%
%
Compared to the algorithm in \cite{hu2016fast},
 our algorithm also uses Ben-Or/Tiwari algorithm to interpolate the coefficients of GCD,
 but our algorithm was based on a new diversification method.
 Also, we do not use the Kronecker substitution and use only primes with small size.
These techniques allow us to give an exact bit complexity, while in their algorithm the smooth prime has  size $O(D^n)$ in theory.
Compared to the algorithm in \cite{tang2018computing},
our algorithm uses the method of isolating maximum term instead of the shifted homogenization introduced in \cite{cuyt2008new}
and our algorithm works for any finite field even if the degree is large.


\section{Basic concepts and preliminary results}

\subsection{Notations}
Let $\X=\{x_1,x_2,\ldots,x_n\}$ and
$$A(\X)=c_1M_1+c_2M_2+\cdots+c_tM_t\in\mathcal{F}[\X],$$ where $\mathcal{F} $ is any field, $c_i\neq 0$ and $M_i$'s are monomials. Let the exponent vector of $M_i$ be $\mathbf{e}_i=(e_{i,1},\dots,e_{i,n})$ and the monomials $M_i$'s are arranged in lexicographically increasing order of $\mathbf{e}_i$'s. Then $c_tM_t$ is called the {\em leading term} and $c_t$ is called the {\em leading coefficient},  denoted as $\LC(A)$.
%

We first introduce the concept of monomial content.
\begin{definition}\label{def-4}
Let $f\in \mathcal{F}[\X]$, where $\mathcal{F}$ is any field. Assume $f=\sum_{i=1}^tc_iM_i$, $c_i\neq 0$ and $M_i$ are distinct monomials. Then $\gcd(M_1,\dots,M_t)$ is called the {\em monomial content} of $f$, denoted  by $\MoCont(f)$. We call $f/\MoCont(f)$ the {\em monomial primitive part} of $f$ and denote it by $\MoPrim(f)$.
\end{definition}

Since $\MoCont(f)$ is the greatest common factor of $M_1,\dots,M_t$, $\MoCont(f)$ is a monomial and $\MoPrim(f)=f/\MoCont(f)$ is coprime to any monomial. In particular, $\MoPrim(f)$ is relatively prime to $\MoCont(f)$.

Let $A,B$ be  nonzero elements of $\mathcal{F}[\X]$.
Then $G=\gcd(A,B)$ is uniquely determined by assuming   $\LC(G)=1$.
%
%
We say that $A$ is similar to $B$, denoted as $A\approx B$, if there exists
an $a\in \mathcal{F}^*$ such that $a A=B$.

Let $\mathbf{s}=(s_1,\dots,s_n)\in\N_+^n$ be an integer vector.
Define
\begin{equation}\label{eq-1}
A_{(\mathbf{s},y)}:=\frac{A(x_1y^{s_1},\dots,x_ny^{s_n})}{y^k}
\end{equation}
where $k$ is the smallest exponent of $y$ in $A(x_1y^{s_1},\dots,x_ny^{s_n})$.
%
%
%
$A_{(\mathbf{s},y)}$ separates the terms of $A$ by degrees of $y$.

Let $A=A_1+\cdots+A_{\ell},$
where $A_i$ is a part of $A$ such that $\deg_yA_i(x_1y^{s_1},\dots,x_ny^{s_n})=d_i$. Assume $d_1<d_2<\cdots<d_{\ell}$.
Then $A_{(\mathbf{s},y)}=A_1+A_2y^{d_2-d_1}+\cdots+A_{\ell}y^{d_{\ell}-d_1}.$
Here $A_{\ell}$ is the leading coefficient of $A_{(\mathbf{s},y)}$ w.r.t $y$ if we regard $\mathcal{F}[\X]$ as the coefficients domain.
%

For any $A\in\F_q[\X]$ and $\overrightarrow{\alpha}=(\alpha_1,\dots,\alpha_n)\in \F^n_q$,
 denote  $A(\overrightarrow{\alpha})=A(\alpha_1,\dots,\alpha_n)$. For any $i\in\N$, denote $\overrightarrow{\alpha}^i=(\alpha^i_1,\dots,\alpha^i_n)$.
The main idea of the modular GCD algorithm is to interpolate $G=\gcd(A,B)$ from a sequence of evaluations.
Pick a sequence of evaluation points $\overrightarrow{\alpha}_1,\overrightarrow{\alpha}_2,\dots$ from $\F^n_q$, compute the images of $G$, then interpolate each part $G_i(\X)$ of $G$ from the scaled images.

$\F_q$ may not have enough elements, and in this case we work in a suitable extension $\F_q\subset \F_{q^m}$, where the latter one is represented as $\F_q[z]/\langle \Phi(z)\rangle$, for a degree-$m$ irreducible polynomial $\Phi$ over $\F_q$. With this representation, arithmetic operations in $\F_{q^m}$ can be done in $O^\sim(m)$ arithmetic operations in $\F_q$, and thus in $O^\sim(m \log q)$ bit operations.

The cost of sparse polynomial interpolations is determined mainly by the number of points $\overrightarrow{\alpha}_1,\overrightarrow{\alpha}_2,\dots,$ needed and the size of the prime power $q$ needed.

\subsection{Preliminary results}

We show that the monomial content and the monomial primitive part of the  GCD can be computed separately.
\begin{lemma}\label{lm-7}
Let $A,B,G\in \F_q[\X]$ and assume $G=\gcd(A,B)$. Then
\begin{itemize}
\item[(\lowercase\expandafter{\romannumeral1})]   $\MoCont(G)=\gcd(\MoCont(A),\MoCont(B))$, and
\item[(\lowercase\expandafter{\romannumeral2})]
$\MoPrim(G)=\gcd(\MoPrim(A),\MoPrim(B))$.
\end{itemize}
\end{lemma}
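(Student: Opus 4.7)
The plan is to exploit the unique factorization structure of $\F_q[\X]$ together with the key observation that $\MoPrim(f)$ shares no irreducible factor with any monomial. Indeed, the irreducible factors of a monomial lie in $\{x_1,\dots,x_n\}$, and by construction none of the $x_i$ divides $\MoPrim(f)$; hence $\gcd(\MoCont(f),\MoPrim(g))=1$ for any polynomials $f,g$.

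First, I would write $A=\MoCont(A)\cdot\MoPrim(A)$ and $B=\MoCont(B)\cdot\MoPrim(B)$. Since $\gcd(\MoCont(A),\MoPrim(B))=1$ and $\gcd(\MoPrim(A),\MoCont(B))=1$, the standard UFD identity
$$\gcd(a_1 a_2,\,b_1 b_2)=\gcd(a_1,b_1)\cdot\gcd(a_2,b_2)\quad\text{when }\gcd(a_1,b_2)=\gcd(a_2,b_1)=1$$
yields $G\approx \gcd(\MoCont(A),\MoCont(B))\cdot\gcd(\MoPrim(A),\MoPrim(B))$. Call these two factors $G_1$ and $G_2$ respectively. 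Alternatively (to avoid appealing to the identity as a black box), one checks directly that $G_1 G_2$ divides both $A$ and $B$, and that any common divisor $H$ of $A,B$ factors uniquely as $H_1 H_2$ where $H_1$ is the monomial part of $H$ and $H_2$ is coprime to all monomials; then $H_1\mid\MoCont(A),\MoCont(B)$ and $H_2\mid\MoPrim(A),\MoPrim(B)$, so $H\mid G_1G_2$.

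Second, I would identify the two sides. By construction $G_1$ is a monic monomial, and $G_2$ (normalized to $\LC(G_2)=1$) divides $\MoPrim(A)$, hence is coprime to every monomial. Thus $G_1 G_2$ is already in the form (monic monomial)$\,\cdot\,$(polynomial coprime to monomials), with $\LC(G_1 G_2)=1$. Since $\LC(G)=1$ is the chosen normalization of $G$, the similarity $G\approx G_1 G_2$ becomes an equality $G=G_1 G_2$.

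Finally, I would invoke the uniqueness of the decomposition $G=\MoCont(G)\cdot\MoPrim(G)$: the monomial factor $\MoCont(G)$ is determined as the monomial gcd of the terms of $G$, and $\MoPrim(G)=G/\MoCont(G)$. Matching this against $G=G_1 G_2$ — a product of a monic monomial with a monomial-coprime polynomial — gives $\MoCont(G)=G_1$ and $\MoPrim(G)=G_2$, which are exactly (i) and (ii). There is no real obstacle here; the only point requiring care is tracking the leading-coefficient normalization so that the UFD argument produces equality rather than merely the similarity relation $\approx$, and this is taken care of by the observation $\LC(G_1 G_2)=1=\LC(G)$.
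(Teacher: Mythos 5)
Your proposal is correct and follows essentially the same route as the paper: both decompose $A$ and $B$ into monomial content times monomial primitive part, apply the multiplicativity of $\gcd$ for coprime factor pairs in a UFD, and then identify the two resulting factors with $\MoCont(G)$ and $\MoPrim(G)$ by uniqueness of that decomposition. You are somewhat more careful than the paper about tracking the leading-coefficient normalization to upgrade $\approx$ to $=$, but this is a refinement of the same argument rather than a different one.
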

\begin{proof}
By Definition \ref{def-4}, we have $G=\MoCont(G)\cdot\MoPrim(G)$, where $\MoCont(G)$ is a monomial and \MoPrim(G) is a polynomial without any non-trivial monomial factors.
Then, $G=\gcd(A,B)=\gcd(\MoCont(A)\cdot\MoPrim(A),\MoCont(B)\cdot\MoPrim(B))=\gcd(\MoCont(A),\MoCont(B))\cdot
\gcd(\MoPrim(A),\MoPrim(B))$.
Here $\gcd(\MoCont(A),\MoCont(B))$ is a monomial. Due to the monomial primitivity of $\MoPrim(A)$ and $\MoPrim(B)$, $\gcd(\MoPrim(A),\MoPrim(B))$ is coprime to any monomial factors.
Due to the unique factorization of polynomials,  the lemma is proved.
\end{proof}

%
We should ensure that the GCD remains the same when the field is extended. Denote $G=\gcd_{\mathcal{F}}(A,B)$ as the GCD of $A,B$ over domain $\mathcal{F}[\X]$. The following result is well known.
\begin{lemma}
Assume $\mathcal{F}$ is a field, $A,B\in \mathcal{F}[\X]$. Let $G=\gcd_{\mathcal{F}}(A,B)$. For any extension field $\mathcal{K}\supset {\mathcal{F}}$, treat $A,B$ as the elements of $\mathcal{K}[\X]$. Then $G=\gcd_{\mathcal{K}}(A,B)$.
\end{lemma}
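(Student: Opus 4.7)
The plan is to verify $G = H$ where $G := \gcd_{\mathcal{F}}(A,B)$ and $H := \gcd_{\mathcal{K}}(A,B)$, both normalized to have leading coefficient $1$. One direction is immediate: since $G \mid A$ and $G \mid B$ already in $\mathcal{F}[\X] \subseteq \mathcal{K}[\X]$, we have $G \mid H$ in $\mathcal{K}[\X]$, so I may write $H = GQ$ for some $Q \in \mathcal{K}[\X]$. Writing $A = GA'$ and $B = GB'$ with $A', B' \in \mathcal{F}[\X]$ satisfying $\gcd_{\mathcal{F}}(A',B') = 1$, we then have $Q \mid A'$ and $Q \mid B'$ in $\mathcal{K}[\X]$. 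Thus everything reduces to the base-change claim: if two polynomials in $\mathcal{F}[\X]$ are coprime over $\mathcal{F}$, they remain coprime over $\mathcal{K}$.

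I would prove this claim by induction on $n$. For $n = 1$, B\'ezout in the PID $\mathcal{F}[x_1]$ supplies $u,v\in\mathcal{F}[x_1]$ with $uA'+vB'=1$; this identity transports verbatim to $\mathcal{K}[x_1]$, giving coprimality there. For $n\geq 2$, set $R = \mathcal{F}[x_1,\dots,x_{n-1}]$ with fraction field $E = \mathcal{F}(x_1,\dots,x_{n-1})$, and define $R', E'$ analogously over $\mathcal{K}$. Gauss's lemma in the UFD $R[x_n] = \mathcal{F}[\X]$ upgrades coprimality of $A',B'$ there to coprimality in $E[x_n]$. The $n=1$ argument applied over the field extension $E \subseteq E'$ then transfers coprimality to $E'[x_n]$. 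Finally, Gauss's lemma in the UFD $R'[x_n] = \mathcal{K}[\X]$ descends coprimality back to $\mathcal{K}[\X]$, provided the $R'$-contents of $A',B'$ are coprime in $R'$; these contents lie in $R$ and are already coprime in $R$ (as is immediate from $\gcd_{\mathcal{F}}(A',B') = 1$), so the inductive hypothesis applied in $n-1$ variables ensures they remain coprime in $R'$.

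With the claim established, $Q$ must be a unit of $\mathcal{K}[\X]$, hence a nonzero constant; the matching normalizations $\LC(H)=\LC(G)=1$ then force $Q=1$, yielding $H=G$. The main obstacle is the bookkeeping around the two applications of Gauss's lemma and the inductive handling of contents; modulo those, the argument is standard UFD machinery.
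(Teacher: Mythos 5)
The paper does not prove this lemma at all; it states the result as ``well known'' and moves on. Your argument is a complete, correct proof: reduce to the claim that coprimality in $\mathcal{F}[\X]$ persists over $\mathcal{K}[\X]$, handle $n=1$ by B\'ezout in the PID, and for $n\ge 2$ use Gauss's lemma to pass $R[x_n]\to E[x_n]\to E'[x_n]$ and back down to $R'[x_n]$, with the inductive hypothesis handling the contents in $n-1$ variables. The one place worth tightening is the treatment of contents: you assert that the $R'$-contents of $A',B'$ ``lie in $R$'' before invoking the inductive hypothesis, but identifying $\cont_{R'}(A')$ with $\cont_R(A')$ is itself an application of that hypothesis (iterated over the coefficients of $A'$), so the phrasing is slightly circular. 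It is simpler to note only that $\cont_{R'}(A')$ divides $\cont_R(A')$ in $R'$ (the latter being a common divisor in $R'$ of the coefficients of $A'$), and similarly for $B'$; then the inductive hypothesis applied to the $R$-contents — which are coprime in $R$ because $\gcd_{R[x_n]}(A',B')=1$ — gives coprimality in $R'$, and the $R'$-contents are coprime a fortiori. With that adjustment the argument is airtight.
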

%

%

Our proof will make extensive use of the  Schwartz-Zippel Lemma.
\begin{lemma}\label{lm-6}\cite{Zippel79} Let $\mathcal{F}$ be a field and $A \in \mathcal{F}[\X]$
be non-zero with total degree $D$ and let $S\subset \mathcal{F}$ be a finite set. If $\overrightarrow{\beta}$ is chosen at random from $S^n$ then ${\rm Prob}[A(\overrightarrow{\beta})=0]\leq \frac{D}{|S|}$.
\end{lemma}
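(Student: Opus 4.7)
The plan is to prove the bound by induction on the number of variables $n$, which is the standard approach for the Schwartz--Zippel lemma.

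For the base case $n=1$, the polynomial $A\in\mathcal{F}[x_1]$ is a nonzero univariate polynomial of degree at most $D$, and any such polynomial has at most $D$ roots in $\mathcal{F}$; hence the number of $\beta\in S$ with $A(\beta)=0$ is at most $D$, giving the desired bound $D/|S|$.

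For the inductive step, assuming the lemma for polynomials in fewer than $n$ variables, I would write
\[
A(x_1,\dots,x_n) \;=\; \sum_{i=0}^{k} A_i(x_2,\dots,x_n)\, x_1^i,
\]
where $k$ is the largest index with $A_k\not\equiv 0$ and necessarily $k\le D$. The polynomial $A_k$ lies in $\mathcal{F}[x_2,\dots,x_n]$ and has total degree at most $D-k$. Writing $\overrightarrow{\beta}=(\beta_1,\beta_2,\dots,\beta_n)$ and $\overrightarrow{\beta}'=(\beta_2,\dots,\beta_n)$, I split the event $\{A(\overrightarrow{\beta})=0\}$ into two cases according to whether the leading coefficient $A_k(\overrightarrow{\beta}')$ vanishes: by the inductive hypothesis applied to $A_k$, the probability that $A_k(\overrightarrow{\beta}')=0$ is at most $(D-k)/|S|$; and conditional on $A_k(\overrightarrow{\beta}')\neq 0$, the univariate polynomial $A(x_1,\overrightarrow{\beta}')\in\mathcal{F}[x_1]$ is nonzero of degree exactly $k$, so the base case gives that $\beta_1$ is a root with probability at most $k/|S|$. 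Since $\beta_1$ is chosen independently of $\overrightarrow{\beta}'$, a union bound over the two cases yields
\[
\Pr[A(\overrightarrow{\beta})=0] \;\le\; \frac{D-k}{|S|}+\frac{k}{|S|} \;=\; \frac{D}{|S|}.
\]

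There is no real obstacle here; the only point requiring a moment's care is isolating a single variable (any variable that actually appears in $A$ will do) so that the leading coefficient in that variable is a nonzero polynomial to which the inductive hypothesis applies, and ensuring the independence of $\beta_1$ from the remaining coordinates when combining the two bounds.
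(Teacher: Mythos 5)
The paper cites this lemma from Zippel's paper without giving a proof, since it is the classical Schwartz--Zippel lemma. Your inductive argument is the standard textbook proof and is correct: you isolate a variable that actually occurs in $A$, write $A=\sum_{i=0}^{k}A_i x_1^i$ with $A_k\not\equiv 0$, bound $\Pr[A_k(\overrightarrow{\beta}')=0]\le (D-k)/|S|$ by induction (noting $\deg A_k\le D-k$), bound the conditional probability $\Pr[A(\beta_1,\overrightarrow{\beta}')=0\mid A_k(\overrightarrow{\beta}')\neq 0]\le k/|S|$ using the univariate base case and independence of $\beta_1$ from $\overrightarrow{\beta}'$, and combine via a union bound. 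The one point you flag at the end, choosing a variable that genuinely appears so that $k\geq 1$, is worth noting, though even with $k=0$ the estimate degenerates correctly to $D/|S|$; no gap remains.
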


\subsubsection{Resultant}
Let $F_1=\sum_{i=0}^{d}a_iy^i$ and $F_2=\sum_{i=0}^{\ell} b_iy^i$. The $Sylvester\ matrix$ of $A,B$ is the $d+\ell$ by $d+\ell$ matrix

\begin{equation}
\left(
  \begin{array}{cccccccc}
    a_d & a_{d-1} & \cdots & a_1 & a_0 &  & \\
        & a_d & a_{d-1} & \cdots & a_1 & a_0 & \\
        &    & \cdots & \cdots & \cdots & \cdots &  \\
     &  &  & a_d & \cdots & \cdots & a_0  \\
    b_{\ell} & b_{\ell-1} & \cdots & b_1 & b_0 &  & \\
        & b_{\ell} & b_{\ell-1} & \cdots & b_1 & b_0 & \\
        &    & \cdots & \cdots & \cdots & \cdots &  \\
     &  &  & b_{\ell} & \cdots & \cdots & b_0  \\
   \end{array}
\right)
\end{equation}
where the upper part of the matrix consists of $\ell$ rows of coefficients of $F_1$, the lower part consists of $d$ rows of coefficients of $F_2$.
The resultant of $F_1$ and $F_2$ is the determinant of the Sylvester matrix of $F_1,F_2$, written as $\res_y(F_1,F_2)$.
The following are some facts. Denote $\LC_y(F_1)$ as the leading coefficient of $F_1$ w.r.t. $y$.

\begin{lemma}\cite{hu2016fast}
Let $\mathcal{D}$ be any integral domain and $F_1,F_2\in\mathcal{D}[y,\X]$. Let $R=\res_{y}(F_1,F_2),\overrightarrow{\alpha}\in \mathcal{D}^n$. Then

\begin{itemize}
\item[(\lowercase\expandafter{\romannumeral1})]   $R,\LC_y(F_1),\LC_y(F_2)$ are polynomials in $\mathcal{D}[\X]$, and
\item[(\lowercase\expandafter{\romannumeral2})]
If $\mathcal{D}$ is a field and $\LC_y(F_1)(\overrightarrow{\alpha})\cdot \LC_y(F_2)(\overrightarrow{\alpha})\neq 0$,
 then
    $\res_{y}(F_1(y,\overrightarrow{\alpha}),F_2(y,$ $\overrightarrow{\alpha}))=R(\overrightarrow{\alpha})$ and
    $\deg_{y}\gcd(F_1(y,\overrightarrow{\alpha}),F_2(y,\overrightarrow{\alpha}))>0\Longleftrightarrow \res_{y}(F_1(y,\overrightarrow{\alpha}),$ $F_2(y,\overrightarrow{\alpha}))=0$.
\end{itemize}
\end{lemma}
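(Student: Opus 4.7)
My plan is to treat the two items separately, and the main technical point in each case is that the Sylvester construction is compatible with substitution provided the leading coefficients survive.

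For (i), I would argue directly from the definitions. Viewing $F_1, F_2 \in \mathcal{D}[y,\X]$ as univariate polynomials in $y$ with coefficients in $\mathcal{D}[\X]$, every entry of the Sylvester matrix is either $0$ or one of the coefficients, hence lies in $\mathcal{D}[\X]$. Since $\mathcal{D}[\X]$ is a commutative ring, the determinant $R = \res_y(F_1,F_2)$ also lies in $\mathcal{D}[\X]$. The assertion that $\LC_y(F_1), \LC_y(F_2) \in \mathcal{D}[\X]$ is immediate since they are particular coefficients in this same polynomial expansion.

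For the first half of (ii), the key observation is that evaluating at $\overrightarrow{\alpha}$ is a ring homomorphism $\mathcal{D}[\X] \to \mathcal{D}$, and ring homomorphisms commute with determinants. The hypothesis $\LC_y(F_1)(\overrightarrow{\alpha}) \cdot \LC_y(F_2)(\overrightarrow{\alpha}) \neq 0$ guarantees that the shapes of the Sylvester matrices of $F_1, F_2$ (viewed in $\mathcal{D}[\X][y]$) and of $F_1(y,\overrightarrow{\alpha}), F_2(y,\overrightarrow{\alpha})$ (viewed in $\mathcal{D}[y]$) are the same $(d+\ell) \times (d+\ell)$ layout --- no degree drop occurs --- so the second matrix is obtained from the first by entry-wise evaluation. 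Taking determinants on both sides yields $\res_y(F_1(y,\overrightarrow{\alpha}), F_2(y,\overrightarrow{\alpha})) = R(\overrightarrow{\alpha})$.

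For the equivalence in (ii), I would invoke the classical fact that for two univariate polynomials $f_1, f_2$ over a field with $\deg f_1 = d$ and $\deg f_2 = \ell$ (both exact, not just upper bounds), the resultant $\res_y(f_1,f_2)$ vanishes if and only if $f_1$ and $f_2$ share a common factor of positive degree in $y$, equivalently $\deg_y \gcd(f_1,f_2) > 0$. Applying this to $f_i := F_i(y,\overrightarrow{\alpha})$ --- whose degrees in $y$ are still $d$ and $\ell$ precisely because of the non-vanishing of the leading coefficients --- gives the stated equivalence.

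The only real pitfall is the degree-preservation issue: without the leading-coefficient hypothesis, evaluation could cause the Sylvester matrix to shrink and the first equality in (ii) would fail, as would the classical criterion. Once this is flagged and handled by the hypothesis, the proof reduces to ``determinant commutes with evaluation'' plus the textbook resultant-vs-common-factor theorem over a field, so I do not expect any further obstacle.
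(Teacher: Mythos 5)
The paper does not prove this lemma itself but cites it from Hu--Monagan \cite{hu2016fast}, so there is no in-paper argument to compare against; your proof is correct and is the standard textbook one. In particular you correctly isolate the only delicate point --- that the non-vanishing of $\LC_y(F_1)(\overrightarrow{\alpha})$ and $\LC_y(F_2)(\overrightarrow{\alpha})$ is what prevents the Sylvester matrix from shrinking under evaluation, so that ``evaluate then take determinant'' equals ``take determinant then evaluate'' --- and the rest follows from the classical resultant-versus-common-factor criterion over a field.
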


\begin{lemma}\label{the-14}
Let $A,B\in \mathcal{D}[y,\X],\mathbf{s}=(s_1,\dots,s_n)\in\N^n$ and $R=\res_y(A_{(\mathbf{s},y)},B_{(\mathbf{s},y)})$. Then $\deg R\leq 2\|\mathbf{s}\|_{\infty}\deg A\deg B$.
\end{lemma}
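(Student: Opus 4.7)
The plan is to bound $\deg R$ by expanding the resultant as a determinant of the Sylvester matrix and controlling the $\X$-degree of each entry.

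First, I would record two preliminary degree bounds. The substitution $x_i \mapsto x_i y^{s_i}$ sends a monomial $\prod_i x_i^{a_i}$ with $\sum_i a_i \leq \deg A$ to $\prod_i x_i^{a_i} y^{\sum_i s_i a_i}$, whose $y$-exponent is at most $\|\mathbf{s}\|_\infty \deg A$. Hence $d_A := \deg_y A_{(\mathbf{s},y)} \leq \|\mathbf{s}\|_\infty \deg A$, and the same calculation gives $d_B := \deg_y B_{(\mathbf{s},y)} \leq \|\mathbf{s}\|_\infty \deg B$. Moreover, writing $A_{(\mathbf{s},y)} = \sum_j a_j(\X)\, y^j$ with $a_j \in \mathcal{D}[\X]$, each $a_j$ is a sum of monomials that appear in $A$, so $\deg_\X a_j \leq \deg A$; similarly every $y$-coefficient $b_j$ of $B_{(\mathbf{s},y)}$ satisfies $\deg_\X b_j \leq \deg B$.

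Next, I would write $R = \res_y(A_{(\mathbf{s},y)}, B_{(\mathbf{s},y)})$ as the determinant of the $(d_A + d_B) \times (d_A + d_B)$ Sylvester matrix described in the paper. By construction this matrix has $d_B$ rows whose nonzero entries are among the $a_j$'s and $d_A$ rows whose nonzero entries are among the $b_j$'s. Applying the Leibniz formula, every term of the determinant is a signed product of one entry per row, and therefore has $\X$-degree at most
\begin{equation*}
d_B \cdot \deg A + d_A \cdot \deg B \leq \|\mathbf{s}\|_\infty \deg B \cdot \deg A + \|\mathbf{s}\|_\infty \deg A \cdot \deg B = 2\|\mathbf{s}\|_\infty \deg A \deg B.
\end{equation*}
Since $\deg R$ is at most the maximum $\X$-degree over all such terms, the claimed bound $\deg R \leq 2\|\mathbf{s}\|_\infty \deg A \deg B$ follows.

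There is no real obstacle here; the only thing to be careful about is that dividing by $y^k$ in the definition of $A_{(\mathbf{s},y)}$ does not affect any of the degree estimates, and that the asymmetry in the Sylvester matrix (rows of $A_{(\mathbf{s},y)}$ are counted $d_B$ times and vice versa) is precisely what produces the factor of $2$ after combining both contributions.
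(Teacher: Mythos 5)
Your proof is correct and follows essentially the same route as the paper's: bound $\deg_y A_{(\mathbf{s},y)}$ and $\deg_y B_{(\mathbf{s},y)}$ by $\|\mathbf{s}\|_\infty \deg A$ and $\|\mathbf{s}\|_\infty \deg B$, bound the $\X$-degrees of the Sylvester matrix entries by $\deg A$ and $\deg B$, and then bound the degree of the determinant. The only difference is that you spell out the Leibniz expansion explicitly, whereas the paper states the resulting degree bound directly; the substance is identical.
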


\begin{proof}
Assume $A_{(\mathbf{s},y)}=a_dy^d+\cdots+a_1y+a_0$ and $B_{(\mathbf{s},y)}=b_{\ell}y^{\ell}+\cdots+b_1y+b_0$, where $a_i,b_j\in \mathcal{D}[\X]$. By the definition of $A_{(\mathbf{s},y)}$ and $B_{(\mathbf{s},y)}$, we have $d\leq \|\mathbf{s}\|_{\infty}\deg A$ and $\ell\leq \|\mathbf{s}\|_{\infty}\deg B$. As the Sylvester matrix is $d+\ell$ by $d+\ell$ matrix and $\deg a_i\leq \deg A,\deg b_j\leq \deg B$, the degree of $R$ is no more than $\ell\cdot \deg A+d\cdot \deg B$, which is $\leq 2\|\mathbf{s}\|_{\infty}\deg A\deg B$.
\end{proof}

\subsection{Isolating the leading coefficient}
In this section, we will show how to find an $\s$ such that
the leading coefficient of $\gcd(A_{(\s,y)},B_{(\s,y)})$ in $y$ is a monomial.

\subsubsection{Generalized homogenization technique}
Let $A,B\in\F_q[\X]$.
Instead of directly computing the GCD of $A$ and $B$, we compute the GCD of
the {\em generalized homogenizing polynomials} $A_{(\mathbf{s},y)}$ and $B_{(\mathbf{s},y)}$ (see (\ref{eq-1}))
by introducing a new variable $y$. Then $A_{(\mathbf{s},y)},B_{(\mathbf{s},y)}\in \F_q[\X,y]$. Denote $C=\gcd(A_{(\mathbf{s},y)},B_{(\mathbf{s},y)})$ and $G=\gcd(A,B)$. The following lemma shows that $C\approx G_{(\mathbf{s},y)}$, which means $C$ and $G_{(\mathbf{s},y)}$ are the same up to a non-zero constant.

Denote $y^{\mathbf{s}}\X=(x_1y^{s_1},\dots,x_ny^{s_n})$ and $y^{-\mathbf{s}}\X=(x_1/y^{s_1},\dots,x_n/y^{s_n})$.

\begin{lemma}\label{the-5}
Let $A,B\in \F_q[\X]$, $G=\gcd(A,B)$, and $\mathbf{s}=(s_1,\dots,s_n)\in \N^n$.
Then $\gcd(A_{(\mathbf{s},y)},B_{(\mathbf{s},y)})\approx G_{(\mathbf{s},y)}$.
\end{lemma}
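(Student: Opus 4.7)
The plan is to decompose both sides via the coprime factorization of $A$ and $B$, push the substitution through the product, and then argue that coprimeness in $\F_q[\X]$ survives the $(\cdot)_{(\mathbf{s},y)}$ operation.

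First I would write $A = G A'$ and $B = G B'$ in $\F_q[\X]$, where $\gcd(A',B') = 1$. The key preliminary observation is that the operation $F \mapsto F(y^{\mathbf{s}}\X)$ is a ring homomorphism $\F_q[\X] \to \F_q[\X,y]$, and moreover $(FH)_{(\mathbf{s},y)} = F_{(\mathbf{s},y)} H_{(\mathbf{s},y)}$: the lowest $y$-degree of a product equals the sum of the lowest $y$-degrees of the factors, because the coefficient (in $\F_q[\X]$) of the lowest $y$-power in the product factors as a product of nonzero coefficients in the integral domain $\F_q[\X]$. Consequently $A_{(\mathbf{s},y)} = G_{(\mathbf{s},y)} A'_{(\mathbf{s},y)}$ and $B_{(\mathbf{s},y)} = G_{(\mathbf{s},y)} B'_{(\mathbf{s},y)}$, so it suffices to prove that $\gcd(A'_{(\mathbf{s},y)}, B'_{(\mathbf{s},y)}) \approx 1$ in the UFD $\F_q[\X,y]$.

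To establish this coprimeness, I would work in the Laurent ring $R = \F_q[\X, y, y^{-1}]$. The substitution $\sigma \colon x_i \mapsto x_i y^{s_i},\ y \mapsto y$ extends to an automorphism of $R$ with inverse $x_i \mapsto x_i y^{-s_i}$. Suppose $H \in \F_q[\X,y]$ is a common divisor of $A'_{(\mathbf{s},y)}$ and $B'_{(\mathbf{s},y)}$. Then $H$ divides $A'(y^{\mathbf{s}}\X) = y^{k_{A'}} A'_{(\mathbf{s},y)}$ and $B'(y^{\mathbf{s}}\X)$ in $R$. Applying $\sigma^{-1}$, the element $\sigma^{-1}(H) \in R$ divides both $A'(\X)$ and $B'(\X)$. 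Since $A', B' \in \F_q[\X]$ are coprime in $\F_q[\X]$, their irreducible factors remain irreducible in the UFD $\F_q[\X, y]$ (these are polynomials of $y$-degree zero), and none of them divides any power of $y$, so they remain coprime in the localization $R$. Hence $\sigma^{-1}(H)$ is a unit of $R$, i.e.\ $\sigma^{-1}(H) = c y^k$ for some $c \in \F_q^*$ and $k \in \Z$, giving $H = \sigma(cy^k) = c y^k$.

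Finally, I would eliminate the factor of $y$. By the very definition of $(\cdot)_{(\mathbf{s},y)}$, the polynomial $A'_{(\mathbf{s},y)}$ has a nonzero term of $y$-degree $0$, so $y \nmid A'_{(\mathbf{s},y)}$; hence $k = 0$ and $H$ is a constant. This shows $\gcd(A'_{(\mathbf{s},y)}, B'_{(\mathbf{s},y)}) \approx 1$, and combined with $A_{(\mathbf{s},y)} = G_{(\mathbf{s},y)} A'_{(\mathbf{s},y)}$ and $B_{(\mathbf{s},y)} = G_{(\mathbf{s},y)} B'_{(\mathbf{s},y)}$ this yields $\gcd(A_{(\mathbf{s},y)}, B_{(\mathbf{s},y)}) \approx G_{(\mathbf{s},y)}$. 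The main obstacle is the Laurent-ring step: one has to verify both that coprimeness in $\F_q[\X]$ lifts to the larger polynomial ring $\F_q[\X,y]$ (which is immediate from UFD factorization) and that it survives inverting $y$, which is why the $y \nmid A'_{(\mathbf{s},y)}$ observation at the end is necessary to close the argument.
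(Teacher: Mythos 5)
Your proof is correct, and it takes a genuinely different route from the paper's. The paper argues directly about $P=\gcd(A(y^{\mathbf{s}}\X),B(y^{\mathbf{s}}\X))$: it shows $G(y^{\mathbf{s}}\X)\mid P$, then substitutes $x_i\mapsto x_iy^{-s_i}$ and carries out explicit bookkeeping of the resulting $y$-powers (the integers $k,u,m'$) to conclude $P\approx y^mG(y^{\mathbf{s}}\X)$, and finally strips the normalizing powers of $y$ to compare $C$ with $G_{(\mathbf{s},y)}$. You instead factor $A=GA'$, $B=GB'$ with $\gcd(A',B')=1$, observe that $(\cdot)_{(\mathbf{s},y)}$ is multiplicative (a small but useful lemma the paper never states), and thereby reduce the whole statement to showing that $A'_{(\mathbf{s},y)}$ and $B'_{(\mathbf{s},y)}$ are coprime; the back-substitution then becomes the inverse of an automorphism $\sigma$ of the Laurent ring $\F_q[\X,y,y^{-1}]$, in which all powers of $y$ are units, so the paper's $y^k$ bookkeeping disappears automatically and the unit group $\{cy^k\}$ is pinned down by $y\nmid A'_{(\mathbf{s},y)}$. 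Your version is the more structural and arguably more transparent one: the coprimality-in-a-localization step is a standard UFD fact and the automorphism makes the substitution reversible in one stroke. The paper's version is more elementary and self-contained (no localization, no automorphism), at the cost of several lines of sign-and-exponent chasing. One detail worth stating explicitly if you polish this up: you use that coprimeness in $\F_q[\X]$ lifts to $\F_q[\X,y]$ (because common divisors of two $y$-degree-$0$ polynomials must themselves have $y$-degree $0$) and then survives inverting $y$ (because $y$ is coprime to every nonzero element of $\F_q[\X]$); you gesture at both, and both are indeed immediate, but spelling them out would make the argument airtight.
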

\begin{proof}
First we claim that $\gcd(A(y^\mathbf{s}\X),B(y^\mathbf{s}\X))\approx y^mG(y^\mathbf{s}\X)$ for some integer $m\geq 0.$
Proof of the claim:
Assume $P=\gcd(A(y^\mathbf{s}\X),B(y^\mathbf{s}\X))$.
$G|A$ and $G|B$ imply that $G(y^\mathbf{s}\X)|A(y^\mathbf{s}\X)$ and $G(y^\mathbf{s}\X)|B(y^\mathbf{s}\X)$, and then we have $G(y^\mathbf{s}\X)|P$.

We prove the reverse direction. Since $P|A(y^\mathbf{s}\X)$,
there exists a $Q\in\F_q[y,\X]$ such that $A(y^\mathbf{s}\X)=P(y,\X)Q(y,\X)$. Replacing $x_iy^{s_i}$ by $x_i$, we have
$A(\X)=P(y,y^{-\mathbf{s}}\X)$ $Q(y,y^{-\mathbf{s}}\X)$. Then, there exists an integer $k$ such that $Q(y,y^{-\mathbf{s}}\X)=y^k(Q_{\ell}y^{d_{\ell}}+\cdots+Q_1y^{d_1}+Q_0)$, where $Q_i\in \F_q[\X]$ and $d_i>0$. So $y^kP(y,y^{-\mathbf{s}}\X)$ is a polynomial in $\F_q[y,\X]$ and $y^kP(y,y^{-\mathbf{s}}\X)|A(\X)$.
If $k>0$, then $P(y,y^{-\mathbf{s}}\X)|A(\X)$. So we can always assume $k\leq 0$.
For the same reason, there exists an integer $u\leq 0$ such that $y^uP(y,y^{-\mathbf{s}}\X)|B(\X)$.
Now let $m'=\min(-k,-u)$.
Without loss of generality, assume $m'=-k$. Then $y^{-m'}P(y,y^{-\mathbf{s}}\X)|A(\X)$ and $y^{-m'}P(y,y^{-\mathbf{s}}\X)|y^{k-u}B(\X)$.
So $y^{-m'}P(y,y^{-\mathbf{s}}\X)|\gcd(A(\X),y^{k-u}B(\X))=\gcd(A,B)$,
which implies $P(y,y^{-\mathbf{s}}\X)|y^{m'}G$. Replace $x_i/y^{s_i}$ by $x_i$, we have $P(y,\X)|y^{m'}G(y^{\mathbf{s}}\X)$. So there exists an integer $m\geq 0$ such that $P\approx y^{m}G(y^{\mathbf{s}}\X)$.
The claim is proved.

Since $C=\gcd(A_{(\mathbf{s},y)},B_{(\mathbf{s},y)})$ and $A_{(\mathbf{s},y)}=\frac{A(y^\mathbf{s}\X)}{y^{d_A}}$, $Cy^{d_A}|A(y^\mathbf{s}\X)$.
Here $d_A$ is the integer $k$ in (\ref{eq-1}).
For the same reason, $Cy^{d_B}|B(y^\mathbf{s}\X)$.
So $$Cy^{\min\{d_A,d_B\}}|\gcd(A(y^\mathbf{s}\X),B(y^\mathbf{s}\X)).$$
By the claim, $\gcd(A(y^\mathbf{s}\X),B(y^\mathbf{s}\X))\approx y^mG(y^\mathbf{s}\X)$ for some integer $m\geq 0$,
so $$Cy^{\min\{d_A,d_B\}}|y^mG(y^\mathbf{s}\X).$$
Then
$C|\frac{y^mG(y^\mathbf{s}\X)}{y^{\min\{d_A,d_B\}}}.$
Clearly, $d_G\leq d_A$ and $d_G\leq d_B$. So
$d_G\leq \min\{d_A,d_B\}$ and  $C|\frac{y^mG(y^\mathbf{s}\X)}{y^{d_G}}=y^mG_{(\mathbf{s},y)}$.
For the reverse direction, since $G|A$ and $G|B$, we have $G(y^\mathbf{s}\X)|A(y^\mathbf{s}\X)$ and $G(y^\mathbf{s}\X)|B(y^\mathbf{s}\X)$.
Since $G(y^\mathbf{s}\X)=G_{(\mathbf{s},y)}\cdot y^{d_G}$ and $A(y^\mathbf{s}\X)=A_{(\mathbf{s},y)}\cdot y^{d_A}$,
$G_{(\mathbf{s},y)}|A_{(\mathbf{s},y)}$. For the same reason, we have $G_{(\mathbf{s},y)}|B_{(\mathbf{s},y)}$.
So we have $$G_{(\mathbf{s},y)}|\gcd(A_{(\mathbf{s},y)},B_{(\mathbf{s},y)})=C.$$
So there exists an integer $m'$ such that $C\approx y^{m'}G_{(\mathbf{s},y)}$. Regard $C$ and $G_{(\mathbf{s},y)}$ as polynomials in $y$ with coefficients in $\F_q[\X]$, we know both $C$ and $G_{(\mathbf{s},y)}$ have non-zero constants, so $m'=0$. The lemma is proved.
\end{proof}

Once $C=\gcd(A_{(\mathbf{s},y)},B_{(\mathbf{s},y)})$ is computed, the polynomial $C(1,\X)$ is similar to $\gcd(A,B)$.

\subsubsection{Isolating the leading coefficient}
In previous work on GCD computation, $A_{(\mathbf{1},y)}$ instead of $A_{(\mathbf{s},y)}$
is used, where $\mathbf{1}$ is the vector all of whose entries are $1$.
Suppose   $G=5x^3_1x_2+7x^5_1x^8_2+4x^9_1x^4_2$ is the GCD to be computed.
Then $G_{(\mathbf{1},y)} = 5x^3_1x_2+(7x^5_1x^8_2+4x^9_1x^4_2)y^9$.
Regarding $y$ as the main variable, $G_{(\mathbf{1},y)}$ is not monic. In this case, the sparse modular GCD algorithm of Zippel cannot be applied directly as the leading coefficient in the univariate images of $G$ in $y$ cannot be known in advance.
In the computing of GCD, how to find such a leading coefficient of $A_{(\mathbf{1},y)}$
is a key and bottleneck step.
On the other hand, let $\mathbf{s}=(1,2)$. Then 
the leading coefficient of $G_{(\mathbf{s},y)}=5x^3_1x_2+4x^9_1x^4_2y^{12}+7x^5_1x^8_2y^{16}$ in $y$
is a monomial, which will be used to greatly simplify the GCD computation.

In this section, we will introduce a new method to solve this leading coefficient problem. We know that $\LC_y(G)$ divides $\gcd(\LC_y(A), \LC_y(B))$.
If a new variable $y$ is constructed so that $\gcd(\LC_y(A),\LC_y(B))$ is only a monomial, then $\LC_y(G)$ must also be a monomial. In order to make $\gcd(\LC_y(A),\LC_y(B))$ a monomial, the simplest case is that $\LC_y(A)$ or $\LC_y(B)$ is a monomial.

Before our description, we define the concept of the maximum isolated term.
\begin{definition}
Let $F=f_{\ell}y^{e_{\ell}}+f_{\ell-1}y^{e_{\ell-1}}+\cdots+f_1y^{e_1}\in \F_q[\X,y]$,
where $f_i\in \F_q[\X],f_i\neq 0$ and $e_{\ell}>\cdots>e_1\geq 0$. If $f_{\ell}$ is a single term in $\F_q[\X]$, then we say $F$ has a maximum isolated term w.r.t $y$.
\end{definition}

The following lemma says that if a polynomial has a maximum isolated term w.r.t $y$, then so do its factors.
\begin{lemma}
If $F\in \F_q[\X,y]$ has a maximum isolated term w.r.t $y$, then its factor polynomials
also have maximum isolated terms w.r.t $y$.
\end{lemma}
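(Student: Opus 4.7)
The plan is to reduce to the two-factor case and then exploit the fact that a product of polynomials in $\F_q[\X]$ is a single monomial only when both factors are single monomials.

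First I would write any factorization $F = G \cdot H$ with $G, H \in \F_q[\X, y]$, and express both as polynomials in $y$ with coefficients in $\F_q[\X]$:
\begin{equation*}
G = g_k y^{d_k} + g_{k-1} y^{d_{k-1}} + \cdots + g_1 y^{d_1},\qquad H = h_m y^{c_m} + h_{m-1} y^{c_{m-1}} + \cdots + h_1 y^{c_1},
\end{equation*}
where $g_k, h_m \neq 0$ are the leading coefficients w.r.t.\ $y$. Multiplying out and comparing coefficients of the highest power of $y$, the leading coefficient of $F$ w.r.t.\ $y$ is exactly $g_k \cdot h_m$, hence $f_\ell = g_k h_m$.

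Next I would invoke the hypothesis that $f_\ell$ is a single term in $\F_q[\X]$, i.e.\ of the form $c \, x_1^{a_1} \cdots x_n^{a_n}$ with $c \in \F_q^*$. Since $\F_q[\X]$ is a unique factorization domain and each $x_i$ is irreducible, the only factorizations of the monomial $f_\ell$ (up to units) are products of monomials. Hence $g_k$ and $h_m$ must themselves be single terms in $\F_q[\X]$. Thus both $G$ and $H$ have a maximum isolated term w.r.t.\ $y$.

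Finally, for an arbitrary factor $G$ of $F$, write $F = G \cdot H$ and apply the above argument; this immediately gives the conclusion. No obstacle is anticipated since the argument is essentially a one-line consequence of unique factorization of monomials; the only point requiring attention is the explicit justification that if a product in $\F_q[\X]$ is a single term, then each factor is a single term, which I would state as a brief lemma or inline remark.
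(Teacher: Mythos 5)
Your proof is correct and follows essentially the same route as the paper's: write $F = G\cdot H$, observe that the leading coefficient of $F$ in $y$ is the product $g_k h_m$ of the two leading coefficients (no cancellation since $\F_q[\X]$ is a domain), and conclude that a monomial cannot factor into non-monomial polynomials. The paper states that last step as a contrapositive (``if $g_\ell$ or $h_t$ has more than one term, so does the product''), whereas you invoke unique factorization into the irreducibles $x_i$; these are the same elementary observation.
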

\begin{proof}
Assume $F=G\cdot H$ and $G=g_{\ell}y^{d_{\ell}}+\cdots+g_1y^{d_1}$ and $H=h_ty^{e_t}+\cdots+h_1t^{e_1}$. Then the leading coefficient of $F$ is $g_{\ell}\cdot h_t$. If the number of terms of $g_{\ell}$ or $h_t$ exceeds one, so does  $g_{\ell}\cdot h_t$, which contradicts to assumption of $F$.
\end{proof}

The following theorem gives a probabilistic method to construct  $\mathbf{s}$, so that the new polynomial has a maximum isolated term. 

\begin{theorem}\label{the-4}
Let $A(\X)\in\F_q[\X]$, $T\geq\#A$, $N=2(T-1)$.
If we choose a vector $\mathbf{s}=(s_1,\dots,s_n)\in [1,N]^n$ uniformly at random, then $A_{(\mathbf{s},y)}$ has a maximum isolated term w.r.t $y$  with probability $\geq \frac12$.
\end{theorem}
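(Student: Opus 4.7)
Write $A=\sum_{i=1}^{t}c_i\mathbf{X}^{\mathbf{e}_i}$ with $t=\#A\leq T$ and the $\mathbf{e}_i$ distinct. Under the substitution $x_k\mapsto x_k y^{s_k}$, the $i$-th term becomes $c_i\mathbf{X}^{\mathbf{e}_i}y^{\mathbf{s}\cdot\mathbf{e}_i}$, so the leading coefficient (in $y$) of $A_{(\mathbf{s},y)}$ equals $\sum_{i\in M(\mathbf{s})}c_i\mathbf{X}^{\mathbf{e}_i}$, where $M(\mathbf{s})=\argmax_i\mathbf{s}\cdot\mathbf{e}_i$. Because the $\mathbf{X}^{\mathbf{e}_i}$ are distinct monomials in $\F_q[\X]$, this sum is a single monomial precisely when $|M(\mathbf{s})|=1$. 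The theorem thus reduces to showing $\Pr[\,|M(\mathbf{s})|\geq 2\,]\leq 1/2$.

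My plan is to condition on $(s_2,\dots,s_n)$ and analyze $s_1$. For any fixed $(s_2,\dots,s_n)$, the dot products become linear functions $f_i(s_1)=e_{i,1}s_1+d_i$, where $d_i=\sum_{k\geq 2}s_k e_{i,k}$. As the pointwise maximum of $t$ lines, $G(s_1):=\max_i f_i(s_1)$ is convex piecewise linear with at most $t-1$ breakpoints, and away from a breakpoint the maximum is attained by a unique line. Consequently, provided no two lines $f_i,f_j$ coincide identically, the set of $s_1\in[1,N]$ with $|M(\mathbf{s})|\geq 2$ has size at most $t-1$, which yields the conditional estimate $\Pr_{s_1}[|M(\mathbf{s})|\geq 2\mid s_2,\dots,s_n]\leq (t-1)/N\leq (T-1)/(2(T-1))=1/2$.

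The main technical obstacle, which I expect to be the crux of the argument, is handling the case in which two distinct $\mathbf{e}_i\neq\mathbf{e}_j$ nevertheless produce identical lines $f_i\equiv f_j$, i.e., $e_{i,1}=e_{j,1}$ and $\sum_{k\geq 2}s_k(e_{i,k}-e_{j,k})=0$. In this degenerate regime, $|M(\mathbf{s})|\geq 2$ may persist on an entire interval of $s_1$ rather than at isolated breakpoints, so the envelope bound above breaks down. The coincidence condition is a nontrivial linear equation in $(s_2,\dots,s_n)$, so Lemma \ref{lm-6} (Schwartz-Zippel) bounds its probability by $1/N$ per pair. Finishing the proof requires a careful combination of this collision bound with the breakpoint count---for instance by iterating the conditioning over several coordinates, or by constructing an injection from the ``bad'' set into $\{1,\dots,t-1\}\times[1,N]^{n-1}$---so that the aggregate still satisfies $\Pr[\,|M(\mathbf{s})|\geq 2\,]\leq (t-1)/N$. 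This bookkeeping is the step I anticipate will require the most delicacy.
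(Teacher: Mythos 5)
Your approach---reducing the question to a one-dimensional upper-envelope argument by conditioning on $(s_2,\dots,s_n)$---is genuinely different from the paper's, which works directly with the polyhedral decomposition of the upper envelope $\DC(\mathbf{s})=\max_i d_{\mathbf{s},i}$ in all $n$ variables and argues by induction on the number of terms that the ``ridge'' set is contained in the zero set of a single polynomial $B(\mathbf{s})=\prod_{i=1}^{\ell-1}(d_{\mathbf{s},\mu_i}-d_{\mathbf{s},\mu_{i+1}})$ of degree at most $T-1$, to which Schwartz--Zippel is applied once in all $n$ variables. You correctly locate the soft spot in your plan (coinciding lines), but the obstacle is more severe than you suggest. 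The per-fiber bound you rely on---that at most $t-1$ values of $s_1$ are bad for each fixed $(s_2,\dots,s_n)$---is simply false in the degenerate case, not merely harder to prove. Take $n=3$ with $\mathbf{e}_1=(1,1,0)$, $\mathbf{e}_2=(1,0,1)$, $\mathbf{e}_3=(0,0,0)$, so $t=3$. Conditioning on any fiber with $s_2=s_3$, the two lines $f_1(s_1)=s_1+s_2$ and $f_2(s_1)=s_1+s_3$ coincide identically and strictly dominate $f_3\equiv 0$ for all $s_1\ge 1$; hence $|M(\mathbf{s})|\ge 2$ for \emph{every} $s_1\in[1,N]$ in that fiber, and the conditional probability is $1$, not $O(t/N)$.

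This means the repair you gesture at does not close. A union bound over the up-to-$\binom{t}{2}$ collision constraints $e_{i,1}=e_{j,1},\ \sum_{k\ge 2}s_k(e_{i,k}-e_{j,k})=0$ (each of probability at most $1/N$) contributes a term of order $t^2/N$, and with $N=2(T-1)$ this is of order $t$, vastly exceeding the target $1/2$; adding the $(t-1)/N$ breakpoint contribution does not help. Likewise, the injection you propose from the bad set into $\{1,\dots,t-1\}\times[1,N]^{n-1}$ cannot be built fiberwise because, as shown above, some fibers are entirely bad; making it global would require redistributing mass across fibers, which is a genuinely new idea not present in your sketch. To push this route through you would need to argue, at minimum, that for each pair $(i,j)$ there is a coordinate $k$ along which the slopes differ (which is true, since $\mathbf{e}_i\ne\mathbf{e}_j$) and then aggregate these per-pair, per-coordinate bounds without paying the $\binom{t}{2}$ factor---and none of that accounting is supplied. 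As it stands, the proposal has a real gap precisely at the step you flag.
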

\begin{proof}
Assume $A=\sum_{i=1}^t a_ix_1^{e_{i,1}}\cdots x_n^{e_{i,n}}$.
The degrees of $y$ of terms in $A(y^\mathbf{s}\X)$
for $\mathbf{s}=(s_1,\dots,s_n)$ are $d_{\s,i}=e_{i,1}s_1+\cdots+e_{i,n}s_n,i=1,\dots,t$.
Let $d_{\s,\max}=\max_{i=1}^t  d_{\s,i}$ and call  $(s_1,\dots,s_n,d_{\s,\max})$ the maximum point of $\mathbf{s}$.

Considering $s_1,\dots,s_n,z$ as variables, we have $t$ hyperplanes $P_i:z=e_{i,1}s_1+\cdots e_{i,n}s_n,i=1,\ldots,t$ in $\R^{n+1}$.
%
%
Let $\overline{S}=\{\s\in\R^n\,:\, s_i>0,i=1,\ldots,n \}$ be the open first octant.
Define $\PC\subset\R_+^{t+1}$ as follows.
$$\PC=\{(\s,d_{\s,i_m})\,:\, \s\in\overline{S} \hbox{ and } i_m\in\argmax_{i=1}^t d_{\s,i} \}$$
that is, $\PC$ consists of maximum points over $\overline{S}$.

We claim that $\PC=\cup_{i=1}^{\ell} Q_i$ is an open $n$-dimensional polyhedral cone,
where $\ell\le t$, $Q_i\subset P_{\mu_i}$ is a convex polyhedral cone, $P_{\mu_i}\ne P_{\mu_j}$ for $i\ne j$,
$Q_i\cap Q_{i+1}\subset\PC$ for $i=1,\ldots,\ell-1$.
Furthermore, the map $\DC(\s)=d_{\s,i_m}: \overline{S}\rightarrow \R$ for $(\s,d_{\s,i_m})\in\PC$
is a concave function.

We prove the claim by induction.
The claim is easily seen to be true for $t=1$.
For $t=2$, let the projection of the intersection of $P_1 $ and $P_2$ to the
$\s$-coordinate space be $R_1=\{\s\,:\, P_1(\s) = P_2(\s)\}$ which is a linear
subspace of the  $\s$-space $\R^n$.
If $R_1$ is outside $\overline{S}$, then we have either
$P_1(\s)>P_2(\s)$ for all $\s\in\overline{S}$
or
$P_1(\s)<P_2(\s)$ for all $\s\in\overline{S}$.
We can set $\PC = \{(\s,P_1(\s))\,:\, \s\in\overline{S}\}$ in the first case
and  $\PC = \{(\s,P_2(\s))\,:\, \s\in\overline{S}\}$ in the second case,
and the claim is proved.
If $R_1$ is inside $\overline{S}$, then
$\overline{S}$ is divided into two convex polyhedral cones:
$C_1=\{\s\,:\, d_{\s,1} \ge d_{\s,2} \}$ and
$C_2=\{\s\,:\, d_{\s,2} \ge d_{\s,1}\}$ by $R_1$.
It is clear that $C_1\cap C_2 =R_1$.
Let $Q_i = \{(\s,d_{\s,i})\,:\, \s\in C_i \}$, which are clearly convex polyhedral cones.
Then it is easy to see that $\PC=Q_1\cup Q_2$.
Since all coordinates of $\s$ are positive,
$\DC(\s)$ is clearly concave.
Also note that $A_{(\s,y)}$ has a maximum isolated term for
$\s\in\overline{S}\setminus R_1$.

Suppose the claim is valid for $t$ and $A$ has $t+1$ monomials.
Then for the first $t$ monomials of $A$, $\PC_t=\cup_{i=1}^\ell Q_i$ with $\ell\le t$.
Let $i_1$ be the smallest index such that $P_{t+1}$ intersects $Q_{i_1}$
and $i_2\geq i_1+1$ be the next smallest index such that $P_{t+1}$ intersects $Q_{i_2}$.
Here, we consider the generic case, that is  $Q_{i_1}\cap Q_{i_1+1}\subset P_{t+1}$
and $Q_{i_2}\cap Q_{i_2+1}\subset P_{t+1}$  are not valid.
If one of them is valid, the claim can be proved similarly.
Also, $P_{t+1}$ may intersect only one $Q_{i}$, and this case can also be proved similarly.

Since $\DC_t(\s)$  is concave, $P_{t+1}$ intersects
no $Q_{i}$ for $i\geq i_2+1$,
that is $P_{t+1}$ intersects essentially at most two $Q_i$s.
Let $E_i (i=1,\ldots,t)$,  $R_1$, and $R_{2}$ be the projections of $Q_{i}$,
$P_{t+1}\cap  Q_{i_1}$ and $P_{t+1}\cap  Q_{i_2}$
to the $\s$-coordinate space.
Further let
\begin{equation*}
\label{eq-def11}
\begin{array}{ll}
C_1=\{\s\in E_{i_1}\,:\, d_{\s,\mu_{i_1}} \ge d_{\s,t+1} \} &
\widetilde{Q}_{i_1}=\{(\s,d_{\s,\mu_{i_1}}) \,:\, \s\in C_1 \} \\
C_2=\{\s\in E_{i_2}\,:\, d_{\s,\mu_{i_2}} \ge d_{\s,t+1}\}&
\widetilde{Q}_{i_2}=\{(\s,d_{\s,\mu_{i_2}}) \,:\, \s\in C_2 \} \\
C_3=\{\s\in \cup_{i=i_1}^{i_2} E_{i}\,:\, d_{\s,t+1} \ge d_{\s,\mu_i}\} &
\widetilde{Q}_{i_3}=\{(\s,d_{\s,t+1}) \,:\, \s\in C_3 \}.
\end{array}
\end{equation*}
Since $\DC_t(\s)$ is concave, we have $d_{\s,t+1} \ge d_{\s,\mu_i}$ for $i=i_1+1,\ldots,i_2-1$.
Then, it can be shown that the following decomposition satisfies the properties in the claim
$$\PC_{t+1} =Q_1\cup\cdots Q_{i_1-1}\cup \widetilde{Q}_{i_1}\cup \widetilde{Q}_{i_3}\cup \widetilde{Q}_{i_2}
\cup Q_{i_2+1}\cup\cdots Q_{\ell}.$$

Let $F_i (i=1,\ldots,\ell-1)$ be the projections of $Q_{i}\cap Q_{i+1}$ to the $\s$-coordinate space.
Then, for $\s \in \overline{S}\setminus \cup_{i=1}^{\ell-1} F_i$,
$A_{(\s,y)}$ has a maximum isolated term.
%
%
%
%
Define the polynomial $B(\s) = \prod_{i=1}^{\ell-1}(d_{\s,\mu_i}-d_{\s,\mu_{i+1}})$.
Then
$\deg B(\s) \le T-1$. By Lemma \ref{lm-6}, if randomly choose $\mathbf{s}\in [1,N]^n$, with the probability $\geq 1-\frac{\ell}{N}\geq 1-\frac{T-1}{2(T-1)}=\frac12$, $\mathbf{s}$ is not a zero of the $B(\s)$, and in this case, $A_{(\mathbf{s},y)}$ has a maximum isolated term.
\end{proof}

\begin{example}
\label{ex-11}
Let $A=2x_1^7x_2^3+3x_1^5x_2^8+5x_1x_2^9\in\F_{11}[x_1,x_2]$. For $\mathbf{s}\in \N^2$, we have the degrees of $y$ in $A(x_1y^{s_1},x_2y^{s_2})$ are  $d_1=7s_1+3s_2,d_2=5s_1+8s_2,d_3=s_1+9s_2$.
Regarding $s_1,s_2,z$ as variables,  we obtain three hyperplanes:
\[
\begin{cases}
P_1:z=7s_1+3s_2\\
P_2:z=5s_1+8s_2\\
P_3:z=s_1+9s_2
\end{cases}
\]
As shown Figure \ref{figrdegree},  $P_1,P_2,P_3$ form an open polyhedral cone $\PC$, which is concave as a function of $(s_1,s_2)$.
The  projection of the edges of $\PC$ to the $s_1s_2$-coordinate plane are two lines  $7s_1+3s_2=5s_1+8s_2$ and $5s_1+8s_2=s_1+9s_2$, shown in Figure \ref{figrdegree2}.
Over these two lines, two of $P_1,P_2,P_3$ achieve the same maximum value for a given $\mathbf{s}$.
Thus $A_{(\s,y)}$ has a maximum isolated term if and only if
$7s_1+3s_2\neq 5s_1+8s_2$ and $5s_1+8s_2\neq s_1+9s_2$.

\begin{figure}[!hptb]
\begin{minipage}[t]{0.49\linewidth}
\centering
\includegraphics[scale=0.22]{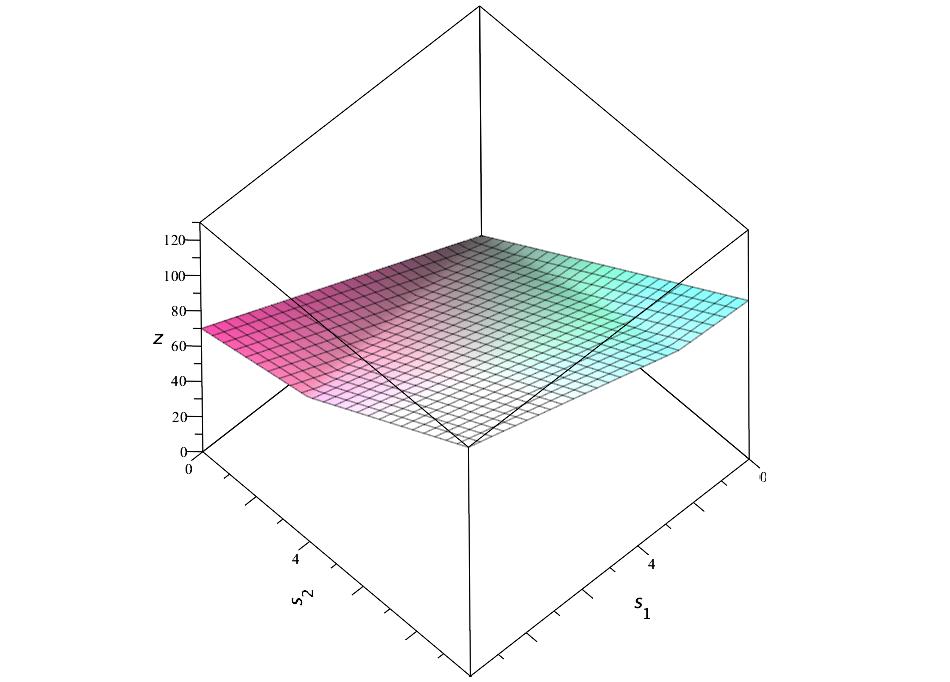}
\caption{The open polyhedral cone $\PC$ formed by $P_1,P_2,P_3$.}
\label{figrdegree}
\end{minipage}
\begin{minipage}[t]{0.49\linewidth}
\centering
\includegraphics[scale=0.50]{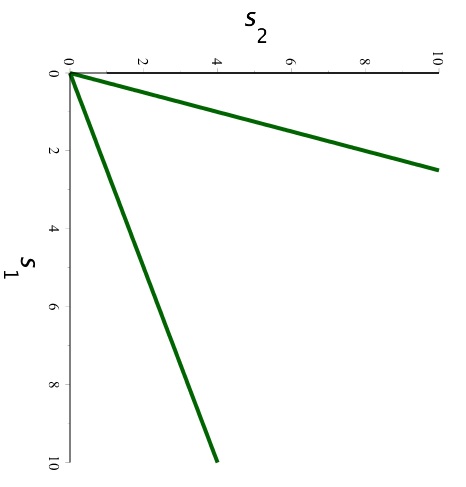}
\caption{Projection of edges of the open polyhedral cone}\label{figrdegree2}
\end{minipage}
\end{figure}
\end{example}

For polynomials $A$ and $B$,  we can always choose a vector $\mathbf{s}$ such that $A_{(\mathbf{s},y)}$ or $B_{(\mathbf{s},y)}$ has a maximum isolated term. As $N=2(T_A-1)$ or $N=2(T_B-1)$, the degrees of $A_{(\mathbf{s},y)}$ or $B_{(\mathbf{s},y)}$  in $y$ are $O(T_AD)$ or $O(T_BD)$. Once one of $A$ and $B$
has a maximum isolated term, so does their GCD.

\subsection{Diverse polynomial and sparse interpolation over finite fields}
In this section, we give the Ben-Or/Tiwari sparse interpolation over finite fields.

\subsubsection{Diverse polynomials}
\label{sec-dp}
We   use the following concept of diverse polynomials, introduced by Giesbrecht and Roche \cite{giesbrechtD11}.
%
\begin{definition}\label{def-1}
Let $\mathcal{R}$ be any ring. If a polynomial $f\in \mathcal{R}[\X]$ has all coefficients distinct; that is, $f=\sum_{i=1}^tc_iM_i$ and $c_i=c_j\Rightarrow i=j$, then we say $f$ is {\em diverse}.
\end{definition}

We define the following more loosely concept: diverse w.r.t. $y$.
\begin{definition}
Let $F\in \mathcal{R}[y,\X]$. Assume
$F=\sum_{i=0}^{d}a_iy^i.$
$F$ is called {\em diverse} w.r.t. $y$ if each $a_i\in \mathcal{R}[\X]$ is diverse.
\end{definition}

The following is an illustrative  example for this concept.
\begin{example}
Let $F=(4x_1x^2_2+6x^2_1x^5_2)y^3+(3x^2_1x_2+2x^2_1x^2_2)y\in \F_7[y,x_1,x_2].$
Regard $y$ as the main variable in $F$. The coefficients of $y^3$ and $y$ are $4x_1x^2_2+6x^2_1x^5_2$ and $3x^2_1x_2+2x^2_1x^2_2$. Both of them have the pair-wise different coefficients, so $F$ is $diverse$ w.r.t. $y$.
As a counter-example, if $F=(6x_1x^2_2+6x^2_1x^5_2)y^3+(3x^2_1x_2+2x^2_1x^2_2)y\in\F_7[y,x_1,x_2],$
$6x_1x^2_2+6x^2_1x^5_2$, the coefficient of $y^3$, has the same coefficient $6$. So $F$ is not diverse w.r.t $y$.
\end{example}

Giesbrecht and Roche \cite{giesbrechtD11} introduced a method of
{\em diversification}, which converted a non-diverse polynomial into a diverse polynomial with high probability.
If $\zeta_1,\dots,\zeta_n\in \mathcal{R}^*$, the polynomials
$f(\zeta_1x_1,\dots,\zeta_nx_n)\leftrightarrow f(\X)$
 are one-to-one corresponding. We can interpolate $f(\zeta_1x_1,\dots,\zeta_nx_n)$ instead of $f(\X)$.
If $f(\X)=\sum_{i=1}^tc_ix^{e_{i,1}}_1\cdots x^{e_{i,n}}_n,$
then $$f(\zeta_1x_1,\dots,\zeta_nx_n)=\sum_{i=1}^tc_i\zeta^{e_{i,1}}_1\cdots \zeta^{e_{i,n}}_nx^{e_{i,1}}_1\cdots x^{e_{i,n}}_n=\sum_{i=1}^t\widetilde{c}_ix^{e_{i,1}}_1\cdots x^{e_{i,n}}_n,$$
where
$\widetilde{c}_i=c_i\zeta^{e_{i,1}}_1\cdots \zeta^{e_{i,n}}_n.$
Now the coefficients of $f(\zeta_1x_1,\dots,\zeta_nx_n)$ are $\widetilde{c}_i$'s.
Giesbrecht and Roche \cite{giesbrechtD11} proved that if $\mathcal{R}$ has enough many elements and $(\zeta_1,\dots,\zeta_n)$ are randomly chosen from $\mathcal{R}^{*n}$, then $f(\zeta_1x_1,\dots,\zeta_nx_n)$ is diverse with high probability.
Once $g=f(\zeta_1x_1,\dots,\zeta_nx_n)$ is known, so $f=g(\zeta^{-1}_1x_1,\dots,\zeta^{-1}_nx_n).$

The following theorem states that diversification of   $F_1$ and $F_2$  leads to diversification of their GCD.
Denote $\overrightarrow{\zeta}\X=(\zeta_1x_1,\dots,\zeta_nx_n)$ and $\overrightarrow{\zeta}^{-1}=(\zeta^{-1}_1,\dots,\zeta^{-1}_n)$.

\begin{lemma}\label{the-3}
Let $F_1,F_2\in \F_q[y,\X]$, $C=\gcd(F_1,F_2)$, and $\overrightarrow{\zeta}=(\zeta_1,\dots,\zeta_n)\in \mathcal{K}^{*n}$, where $\mathcal{K}$ is an extension field of $\F_q$. Then $C(y,\overrightarrow{\zeta} \X )\approx\gcd(F_1(y,\overrightarrow{\zeta} \X),F_2(y,\overrightarrow{\zeta} \X))$.
\end{lemma}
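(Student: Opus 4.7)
The plan is to view the substitution $x_i\mapsto \zeta_i x_i$ as a ring automorphism of the polynomial ring (over a suitable field containing the $\zeta_i$) and exploit the fact that automorphisms of UFDs map GCDs to GCDs up to units.

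First I would pass to the extension field. By the preceding lemma on preservation of GCDs under field extensions, $C=\gcd_{\F_q}(F_1,F_2)$ equals $\gcd_{\mathcal{K}}(F_1,F_2)$ in $\mathcal{K}[y,\X]$, so it is harmless to work over $\mathcal{K}$ throughout. Next I would define the map $\sigma:\mathcal{K}[y,\X]\to\mathcal{K}[y,\X]$ by $\sigma(f)(y,\X)=f(y,\overrightarrow{\zeta}\X)$. Since each $\zeta_i\in\mathcal{K}^*$, the assignment $x_i\mapsto \zeta_i^{-1}x_i$ defines a two-sided inverse, so $\sigma$ is a $\mathcal{K}$-algebra automorphism. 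In particular $\sigma$ preserves divisibility and carries units to units; the units of $\mathcal{K}[y,\X]$ are precisely $\mathcal{K}^*$.

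Then I would verify the two divisibilities. From $C\mid F_i$ in $\mathcal{K}[y,\X]$, applying $\sigma$ gives $\sigma(C)\mid \sigma(F_i)$ for $i=1,2$, so $\sigma(C)\mid\gcd(\sigma(F_1),\sigma(F_2))$. Conversely, letting $D=\gcd(\sigma(F_1),\sigma(F_2))$, the relation $D\mid\sigma(F_i)$ yields $\sigma^{-1}(D)\mid F_i$, so $\sigma^{-1}(D)\mid C$, and applying $\sigma$ again produces $D\mid\sigma(C)$. Hence $\sigma(C)$ and $D$ divide each other in the UFD $\mathcal{K}[y,\X]$, so they differ by a unit in $\mathcal{K}^*$, giving $\sigma(C)=C(y,\overrightarrow{\zeta}\X)\approx\gcd(F_1(y,\overrightarrow{\zeta}\X),F_2(y,\overrightarrow{\zeta}\X))$.

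There is essentially no hard step here; the only thing to be careful about is that the GCD on the right-hand side is computed in $\mathcal{K}[y,\X]$ rather than $\F_q[y,\X]$ (since the substituted coefficients lie in $\mathcal{K}$), which is exactly what the field-extension lemma handles. Everything else is formal manipulation with the automorphism $\sigma$.
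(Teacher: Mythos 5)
Your proof is correct and takes essentially the same route as the paper: show $C(y,\overrightarrow{\zeta}\X)$ divides $\gcd(F_1(y,\overrightarrow{\zeta}\X),F_2(y,\overrightarrow{\zeta}\X))$ directly, and use the inverse substitution $x_i\mapsto\zeta_i^{-1}x_i$ to get the converse divisibility. Phrasing the substitution as a $\mathcal{K}$-algebra automorphism $\sigma$ is a cleaner packaging of exactly what the paper does step by step.
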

\begin{proof}
Assume $P=\gcd(F_1(y,\overrightarrow{\zeta} \X),F_2(y,\overrightarrow{\zeta} \X))$.
Since $C=\gcd(F_1,F_2)$, we have $C(y,\overrightarrow{\zeta} \X)|F_1(y,\overrightarrow{\zeta} \X)$ and $C(y,\overrightarrow{\zeta} \X)|F_2(y,\overrightarrow{\zeta} \X)$. So we have $C(y,\overrightarrow{\zeta} \X)|P$.
We prove the reverse direction.
From $P|F_1(y,\overrightarrow{\zeta} \X)$, we have
$P(y,\overrightarrow{\zeta}^{-1} \X)|F_1$. For the similar reason, $P(y,\overrightarrow{\zeta}^{-1} \X)|F_2$, which implies $P(y,\overrightarrow{\zeta}^{-1} \X)|\gcd(F_1,F_2)$. So $P(y,\overrightarrow{\zeta}^{-1} \X)|C$ and
then $P|C(y,\overrightarrow{\zeta}\X)$. The lemma is proved.
\end{proof}

\subsubsection{Sparse interpolation over finite fields}
We generalize the Ben-Or and Tiwari algorithm to polynomials over finite fields.
Compared with the original Ben-Or and Tiwari algorithm over fields with characteristic $0$, the
 following assumption need to be satisfied.
\begin{assumption}\label{ass-2}
Let $f=\sum_i c_iM_i$, $M_i=x_1^{e_{i,1}}\cdots x_n^{e_{i,n}}$,
$\overrightarrow{\alpha} =(\alpha_1,\dots,\alpha_n) \in\F_{q^m}^n$, and $\omega$ a primitive root of $\F_q$.
\begin{enumerate}
\item $f$ is a diverse polynomial.
\item The polynomial $\prod_{1\leq i<j\leq t}(M_i-M_j)$ is not zero at point $\overrightarrow{\alpha}$.
\item The polynomial $\prod_{k=1}^n\prod_{1\leq i<j\leq t}(\omega^{e_{i,k}}M_i-\omega^{e_{j,k}}M_j)$ is not zero at point $\overrightarrow{\alpha}$.
\end{enumerate}
\end{assumption}

The algorithm is listed below for ease of the call of other algorithms.
For details, see  \cite{huang21}.
 \begin{algorithm}
 Interpolation
\label{alg-2}

{\noindent\bf Input:}
\begin{itemize}
\item $2T$ evaluations $f(\overrightarrow{\alpha}^i)=f(\alpha^i_1,\dots,\alpha^i_n),i=1,2,\dots,2T$, where Assumptions \ref{ass-2} are satisfied.
\item A primitive root $\omega$ of $\F_q$, where $q>\max_{i=1}^n\deg_{x_i}f$.
\item $2nT$ evaluations $f(\overrightarrow{\alpha}^i_k)=f(\alpha^i_1,\dots,\alpha^{i}_{k-1},(\alpha_k\omega)^i,\alpha^{i}_{k+1},\dots,\alpha^i_n),i=1,2,$ $\dots,2T,k=1,2,\dots,n$.
\end{itemize}

{\noindent\bf Output:}
The polynomial $f=\sum_{i=1}^tc_iM_i$.
\end{algorithm}

We choose the points in the extension field $\F_{q^m}=\F_q[z]/(\Phi)$ with irreducible polynomial $\Phi(z)$ of degree $m$.

\begin{theorem}\label{the-2}\cite{huang21}
Algorithm \ref{alg-2} needs $O^\sim(nm^2T\log^2 q+nT\sqrt{d}\log q)$ bit operations.
\end{theorem}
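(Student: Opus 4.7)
The plan is to analyse the bit cost of Algorithm~\ref{alg-2} stage by stage, using the standard fact that one arithmetic operation in $\F_{q^m}=\F_q[z]/(\Phi)$ costs $O^\sim(m\log q)$ bit operations and that the $T$ pairwise distinct monomial evaluations $M_i(\overrightarrow{\alpha})$ all lie in $\F_{q^m}$ by Assumption~\ref{ass-2}.(2).

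First I would run Berlekamp--Massey on the length-$2T$ sequence $\{f(\overrightarrow{\alpha}^i)\}_{i=1}^{2T}$ to obtain the feedback polynomial $\Lambda(z)\in\F_{q^m}[z]$ of degree $T$ whose roots are exactly the $M_i(\overrightarrow{\alpha})$; this uses $O^\sim(T)$ operations in $\F_{q^m}$, i.e.\ $O^\sim(mT\log q)$ bits. I would then factor $\Lambda$ over $\F_{q^m}$ by equal-degree (Cantor--Zassenhaus) factorisation, which costs $O^\sim(T\log(q^m))=O^\sim(mT\log q)$ operations in $\F_{q^m}$, and hence $O^\sim(m^2T\log^2 q)$ bit operations. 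Repeating this Berlekamp--Massey plus factorisation step on each of the $n$ perturbed sequences $\{f(\overrightarrow{\alpha}^i_k)\}_{i=1}^{2T}$ yields the shifted root sets $\{\omega^{e_{i,k}}M_i(\overrightarrow{\alpha})\}_i$ for $k=1,\dots,n$, contributing a total of $O^\sim(nm^2T\log^2 q)$ bits.

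Second, I would recover the $nT$ integer exponents $e_{i,k}$. For each fixed $k$, the shifted roots differ from the original roots by factors $\omega^{e_{i,k}}\in\F_q^\times$, so pairing them up amounts to identifying, for each original root $M_i(\overrightarrow{\alpha})$, the unique shifted root whose ratio with it lies in $\F_q^\times\subset\F_{q^m}$; this matching can be organised by normalising each root to a canonical $\F_q^\times$-multiple and sorting, which takes $O^\sim(mT)$ bits per variable and is dominated by the factorisation cost. Once the ratio $\omega^{e_{i,k}}\in\F_q$ is isolated, $e_{i,k}\in\{0,\dots,d\}$ is extracted by a Shanks baby-step giant-step discrete logarithm of step $\lceil\sqrt{d+1}\rceil$ based on the primitive root $\omega$. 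The BSGS table of giant powers of $\omega$ is precomputed once in $O^\sim(\sqrt{d}\log q)$ bits; each of the $nT$ queries then costs $O^\sim(\sqrt{d}\log q)$ bits because all operations are performed in $\F_q$, not in $\F_{q^m}$. Finally, the coefficients $c_i$ are read off by a fast transposed Vandermonde solve in $O^\sim(T)$ operations over $\F_{q^m}$, which is absorbed by the earlier bound.

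Adding the two dominant contributions gives the claimed $O^\sim(nm^2T\log^2 q + nT\sqrt{d}\log q)$. The main subtlety I expect to confront is honestly justifying the $n$-fold repetition of Berlekamp--Massey and root finding: the shifted roots do not come pre-matched to the original $M_i(\overrightarrow{\alpha})$ and $\F_{q^m}$ carries no natural ordering, so one genuinely does need a fresh Berlekamp--Massey per variable. The payoff is that every discrete logarithm lives in the prime field $\F_q$---this is precisely why $\sqrt{d}\log q$ (rather than $\sqrt{d}\,m\log q$) appears in the second term, and is ultimately why $\omega$ is required to be a primitive root of $\F_q$ in the hypothesis.
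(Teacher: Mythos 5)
Your stage-by-stage complexity accounting is essentially right and lands on the correct bound: $n{+}1$ runs of Berlekamp--Massey plus root finding over $\F_{q^m}$ contribute the $O^\sim(nm^2T\log^2 q)$ term, and $nT$ baby-step/giant-step discrete logarithms in the \emph{prime} field $\F_q$ contribute the $O^\sim(nT\sqrt{d}\log q)$ term; your remark about why the second term carries $\log q$ rather than $m\log q$ is exactly the right observation and is indeed the reason $\omega$ is required to be a primitive root of $\F_q$.

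However, there is a genuine gap in the matching step, and it shows up precisely in the fact that your argument never uses Assumption~\ref{ass-2}.(1). You propose to pair each base root $M_i(\overrightarrow{\alpha})$ with the unique shifted root whose ratio with it lies in $\F_q^\times$, normalising roots to canonical $\F_q^\times$-orbit representatives and sorting. For this to be well-defined you would need that $M_i(\overrightarrow{\alpha})/M_j(\overrightarrow{\alpha})\notin\F_q^\times$ for all $i\neq j$, which is strictly stronger than Assumption~\ref{ass-2}.(2) (mere distinctness of the $M_i(\overrightarrow{\alpha})$) and is \emph{not} implied by Assumption~\ref{ass-2}.(3) either, since that only excludes the single value $\omega^{e_{j,k}-e_{i,k}}$ from the ratio, not all of $\F_q^\times$. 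The diversification hypothesis (Assumption~\ref{ass-2}.(1), that all coefficients $c_i$ are distinct) is there exactly to perform this pairing differently: after each Berlekamp--Massey run one also solves the transposed Vandermonde system to recover the coefficient attached to each root, and because the same multiset $\{c_i\}$ appears for the base sequence and every shifted sequence, distinctness of the $c_i$ gives a canonical bijection between shifted roots and base roots. That extra $n$ Vandermonde solves and the coefficient-sorting are still $O^\sim(nT)$ operations over $\F_{q^m}$ and absorbed, so the complexity bound you state is unaffected --- but as written your matching criterion does not follow from the stated hypotheses, and the assumption you leave unused is the one that makes the algorithm correct.
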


\begin{remark}
The complexity $nT\sqrt{d}\log q$ comes from the computing of discrete logarithms.
\end{remark}

\subsection{Early termination for the terms bound}
\label{sec-et}
We show how to estimate a tight terms bound for a polynomial.
Kaltofen and Lee \cite{KaltofenL03} proposed the technique {\em early termination}, which can be used to detect the number of terms of $f$ with high probability. Based on this idea, a method that to test whether $f$ is $t$-sparse is given.
Hu and Monagan \cite{hu2021fast} also applied this method to determine the terms bound of the GCD. Let $v_i =f(x_1^i,\dots,x_n^i )$ be the symbolic evaluations of $f$ at powers, and define the Hankel matrices of polynomials

\begin{equation*}
\HK_s=\left(
\begin{array}{cccc}
v_1 & v_2 & \cdots & v_s  \\
v_2 & v_3 & \cdots &v_{s+1} \\
\vdots & \vdots & \ddots & \vdots\\
v_s & v_{s+1} &\cdots & v_{2s-1}\\
\end{array}
\right)
\end{equation*}

Kaltofen and Lee \cite{KaltofenL03} proved that if $s>t$, then $\HK_s$ is singular; if $s\leq t$, $\HK_s$ has full rank.
For any $\overrightarrow{\alpha}$ with components taken from the algebraic completion of $\F_q$, we have
\[\det\HK_s(\overrightarrow{\alpha})\begin{cases}
=0,&\text{if $s>t$},\\
\neq 0\ \text{with high probability,}&
\text{if $s\leq t$}.
\end{cases}\]

The degree of $\det\HK_s$ is bounded by $s^2\deg f$ \cite[Theorem 5]{KaltofenL03}.
If $s\leq t$, for $\overrightarrow{\alpha}$ chosen uniformly at
random from $\F_{q^m}^n$, $\det\HK_s(\overrightarrow{\alpha})$ is nonzero with probability at least $1-s^2\deg f/q^m$ by Lemma \ref{lm-6}.
Choose a random point $\overrightarrow{\alpha} \in \F_{q^m}^n$, if we test whether $\det\HK_s(\overrightarrow{\alpha})=0$ for $s=1,2,3,\dots,t+1,$ the probability that
$\det\HK_{t+1}(\overrightarrow{\alpha})$ is not the first singular Hankel matrix is at most
$\frac{\deg f}{q^m} \sum^t_{s=1}s^2=\frac{t(t+1)(2t+1)\deg f}{6q^m}$.

In this paper, an estimation for $t$ that is tight up to a constant factor is enough. As shown by Arnold \cite{arnold2016sparse}, in this case one can employ a technique called repeated doubling. We make an initial guess $s=1$, and test $\det\HK_s(\overrightarrow{\alpha})=0$ for
$s=1, 2, 2^2,\dots $, until $\det\HK_s(\overrightarrow{\alpha})=0$. In this case, the probability that the first instance of $\det\HK_s(\overrightarrow{\alpha})=0$ is for $t<s\leq 2t$ is
$\frac{\deg f}{q^m}\sum_{i=0}^{\lfloor \log_2 t\rfloor} (2^i)^2 <\frac{4t^2\deg f}{3q^m}.$

\subsection{Good point}
\label{sec-gp}
The main idea of our algorithm is mapping the entire problem to a simpler
domain via homomorphisms. Assume
$$\Phi_{\overrightarrow{\alpha}}:\F_q[\X,y]\rightarrow \F_q[y]$$
is a homomorphism of rings by evaluating $x_i=\alpha_i,i=1,\dots,n$, where $\overrightarrow{\alpha}=(\alpha_1,\dots,\alpha_n)$.
Let $F_1,F_2\in \F_q[\X,y]$ and $C=\gcd(F_1,F_2)$. Then $\Phi_{\overrightarrow{\alpha}}(F_1)=F_1(y,\overrightarrow{\alpha})$ and $\Phi_{\overrightarrow{\alpha}}(F_2)=F_2(y,\overrightarrow{\alpha})$. Compute the GCD of univariate polynomials $\Phi_{\overrightarrow{\alpha}}(F_1)$ and $\Phi_{\overrightarrow{\alpha}}(F_2)$, and it is
easy to see that
$$\Phi_{\overrightarrow{\alpha}}(C)|\gcd(\Phi_{\overrightarrow{\alpha}}(F_1),\Phi_{\overrightarrow{\alpha}}(F_2)).$$

If $\Phi_{\overrightarrow{\alpha}}(C)\approx\gcd(\Phi_{\overrightarrow{\alpha}}(F_1),\Phi_{\overrightarrow{\alpha}}(F_2))$,
$\gcd(\Phi_{\overrightarrow{\alpha}}(F_1),\Phi_{\overrightarrow{\alpha}}(F_2))$ retains parts of the information of $C$ to solve the problem in the original domain.
The leading coefficient $\LC_y(C)(\overrightarrow{\alpha})$ of the GCD is not zero
   if $\Phi_{\overrightarrow{\alpha}}(F_1)$ and $\Phi_{\overrightarrow{\alpha}}(F_2)$ do not
decrease in degree, which leads to the following definition.

\begin{definition}\label{def-2}
Let $F_1,F_2\in \F_q[\X,y]$ and $C=\gcd(F_1,F_2)$. Let $\overrightarrow{\alpha}\in \F^n_{q^m}$. We say $\overrightarrow{\alpha}$ is a {\em good point} for $F_1,F_2$ if $\LC_y(F_1)(\overrightarrow{\alpha}) \neq 0$, $\LC_y(F_2)(\overrightarrow{\alpha}) \neq 0$ and $\deg \Phi_{\overrightarrow{\alpha}}(C) =\deg \gcd(\Phi_{\overrightarrow{\alpha}}(F_1),\Phi_{\overrightarrow{\alpha}}(F_2))$.
\end{definition}

Since $\Phi_{\overrightarrow{\alpha}}(C)|\gcd(\Phi_{\overrightarrow{\alpha}}(F_1),\Phi_{\overrightarrow{\alpha}}(F_2))$,
if $\overrightarrow{\alpha}$ is a good point for $F_1,F_2$, we always have $\Phi_{\overrightarrow{\alpha}}(C)\approx\gcd(\Phi_{\overrightarrow{\alpha}}(F_1),\Phi_{\overrightarrow{\alpha}}(F_2))$.

\begin{lemma}\label{the-13}
Let $F_1,F_2\in \F_q[\X,y]$ and $\overrightarrow{\alpha}\in \F^n_{q^m}$. Assume $$R=\res_y(F_1/\gcd(F_1,F_2),F_2/\gcd(F_1,F_2)).$$ Let $L=R\cdot \LC_y(F_1)\cdot \LC_y(F_2)$. Then
$\overrightarrow{\alpha}$ is a good point for $F_1,F_2$ if and only if $L(\overrightarrow{\alpha})\neq 0$.
\end{lemma}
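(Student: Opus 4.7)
The plan is to prove both directions by factoring $F_1=C\bar{F}_1$ and $F_2=C\bar{F}_2$ with $\gcd(\bar F_1,\bar F_2)=1$, and then tracking how the GCD of the images relates to the image of $C$ through the lemma on resultants stated just above. In this notation, $R=\res_y(\bar F_1,\bar F_2)$ and $L=R\cdot\LC_y(F_1)\cdot\LC_y(F_2)$, and moreover $\LC_y(F_i)=\LC_y(C)\cdot\LC_y(\bar F_i)$ so that $L(\overrightarrow{\alpha})\ne 0$ is equivalent to the simultaneous non-vanishing at $\overrightarrow{\alpha}$ of $R$, $\LC_y(C)$, $\LC_y(\bar F_1)$, and $\LC_y(\bar F_2)$.

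For the ``$\Leftarrow$'' direction, I would first observe that because the leading coefficients in $y$ of $C$, $\bar F_1$, $\bar F_2$ all survive the substitution $x_i\mapsto\alpha_i$, the degrees in $y$ are preserved, hence
\[
\gcd\bigl(\Phi_{\overrightarrow{\alpha}}(F_1),\Phi_{\overrightarrow{\alpha}}(F_2)\bigr)
=\Phi_{\overrightarrow{\alpha}}(C)\cdot\gcd\bigl(\bar F_1(y,\overrightarrow{\alpha}),\bar F_2(y,\overrightarrow{\alpha})\bigr),
\]
up to a unit. Then I would invoke part (ii) of the previous lemma on resultants: since $\LC_y(\bar F_1)(\overrightarrow{\alpha})\LC_y(\bar F_2)(\overrightarrow{\alpha})\ne 0$, we have $\res_y(\bar F_1(y,\overrightarrow{\alpha}),\bar F_2(y,\overrightarrow{\alpha}))=R(\overrightarrow{\alpha})\ne 0$, which forces $\deg_y\gcd(\bar F_1(y,\overrightarrow{\alpha}),\bar F_2(y,\overrightarrow{\alpha}))=0$. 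Plugging this back gives $\deg\gcd(\Phi_{\overrightarrow{\alpha}}(F_1),\Phi_{\overrightarrow{\alpha}}(F_2))=\deg\Phi_{\overrightarrow{\alpha}}(C)$, which together with the two leading-coefficient conditions is exactly the definition of a good point.

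For the ``$\Rightarrow$'' direction, the hypothesis already supplies $\LC_y(F_i)(\overrightarrow{\alpha})\ne 0$, which again forces $\LC_y(\bar F_i)(\overrightarrow{\alpha})\ne 0$. The same product decomposition of the image GCD shows that $\deg\Phi_{\overrightarrow{\alpha}}(C)=\deg\gcd(\Phi_{\overrightarrow{\alpha}}(F_1),\Phi_{\overrightarrow{\alpha}}(F_2))$ can hold only if the extra factor $\gcd(\bar F_1(y,\overrightarrow{\alpha}),\bar F_2(y,\overrightarrow{\alpha}))$ has degree zero in $y$. Invoking the equivalence in part (ii) of the resultant lemma in the other direction then yields $R(\overrightarrow{\alpha})\ne 0$, and the three non-vanishing conditions combine to $L(\overrightarrow{\alpha})\ne 0$.

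The main subtle point I expect to spell out carefully is the justification of the factorization of $\gcd(\Phi_{\overrightarrow{\alpha}}(F_1),\Phi_{\overrightarrow{\alpha}}(F_2))$ and the fact that the degree of $\Phi_{\overrightarrow{\alpha}}(C)$ equals $\deg_y C$: both rely on $\LC_y(C)(\overrightarrow{\alpha})\ne 0$, which is not directly assumed but is extracted from $\LC_y(F_i)(\overrightarrow{\alpha})\ne 0$ via $\LC_y(F_i)=\LC_y(C)\LC_y(\bar F_i)$. Beyond that, the argument is a clean application of the previously stated resultant lemma in both directions.
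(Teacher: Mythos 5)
Your proposal is correct and follows essentially the same route as the paper's proof: decompose $F_i = C\bar F_i$, specialize $R=\res_y(\bar F_1,\bar F_2)$ at $\overrightarrow{\alpha}$ via part~(ii) of the resultant lemma, and conclude $\Phi_{\overrightarrow{\alpha}}(C)\approx\gcd(\Phi_{\overrightarrow{\alpha}}(F_1),\Phi_{\overrightarrow{\alpha}}(F_2))$ iff $R(\overrightarrow{\alpha})\ne 0$. Your write-up is in fact slightly more careful than the paper's, since you explicitly extract $\LC_y(C)(\overrightarrow{\alpha})\ne 0$ and $\LC_y(\bar F_i)(\overrightarrow{\alpha})\ne 0$ from the factorization $\LC_y(F_i)=\LC_y(C)\LC_y(\bar F_i)$, whereas the paper invokes the nonvanishing of $\LC_y(F_i)(\overrightarrow{\alpha})$ and leaves that step implicit.
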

\begin{proof}
First, we have $R=\res_y(F_1/\gcd(F_1,F_2),F_2/\gcd(F_1,F_2))\in \F_q[\X]$. Since $\gcd(F_1/\gcd(F_1,F_2),F_2/\gcd(F_1,F_2))=1$, $R\neq 0$.

Assume $L(\overrightarrow{\alpha})\neq 0$. Then $\LC_y(F_1)(\overrightarrow{\alpha})\neq 0, \LC_y(F_2)(\overrightarrow{\alpha})\neq 0$ and $R(\overrightarrow{\alpha})\neq 0$. Assume $C=\gcd(F_1,F_2)$. As the leading coefficients of $F_1,F_2$ are not zero at point $\overrightarrow{\alpha}$, by the definition of resultant, $R(\alpha)=\res_y(F_1(y,\overrightarrow{\alpha})/C(y,\overrightarrow{\alpha}),F_2(y,\overrightarrow{\alpha})/C(y,\overrightarrow{\alpha}))$.
Since $R(\alpha)\neq 0$, $\gcd(F_1(y,\overrightarrow{\alpha})/C(y,\overrightarrow{\alpha}),F_2(y,\overrightarrow{\alpha})/C(y,\overrightarrow{\alpha}))=1$.
So $C(y,\overrightarrow{\alpha})\approx \gcd(F_1(y,\overrightarrow{\alpha}),F_2(y,\overrightarrow{\alpha}))$, which implies $\deg \Phi_{\overrightarrow{\alpha}}(C) =\deg \gcd(\Phi_{\overrightarrow{\alpha}}(F_1),\Phi_{\overrightarrow{\alpha}}(F_2))$.
So $\overrightarrow{\alpha}$ is a good point for $F_1,F_2$. For the other direction, assume $\overrightarrow{\alpha}$ is a good point for $F_1,F_2$, then $\LC(F_1)(\overrightarrow{\alpha})\neq 0, \LC(F_2)(\overrightarrow{\alpha})\neq 0$. The remaining proof can be traced back directly.
\end{proof}

Of course, a single $\Phi_{\overrightarrow{\alpha}}$ does not usually retain all the information
necessary to solve the problem in the original domain.
For a fixed primitive root $\omega$ of $\F_q$, we have the following definition.
\begin{definition}\label{def-3}
Let $\overrightarrow{\alpha}\in \F^n_{q^m}$. If the points $\overrightarrow{\alpha}^i,i=1,2,\dots,2T$ and $\overrightarrow{\alpha}_k^i,i=1,2,\dots,2T,k=1,2,\dots,n$ (the element at the $k$-th row and $i$-th column of Table \ref{tab-2}) are all {\em good points} for $A$ and $B$, then
$\overrightarrow{\alpha}$ is called a $2T$-$\omega$ good point for $A$ and $B$.
\end{definition}

\noindent
\begin{table}[ht]
\footnotesize
\centering
\begin{tabular}{c|c|c|c|c}
Base&$(\alpha_1,\alpha_2,\dots,\alpha_n)$&$(\alpha^2_1,\alpha^2_2,\dots,\alpha^2_n)$&$\dots$ &$(\alpha^{2T}_1,\alpha_2^{2T},\dots,\alpha_n^{2T})$\\ \cline{1-5}

1& $(\alpha_1\omega,\alpha_2,\dots,\alpha_n)$ & $((\alpha_1\omega)^2,\alpha^2_2,\dots,\alpha^2_n)$ & $\dots$  &$((\alpha_1\omega)^{2T},\alpha_2^{2T},\dots,\alpha_n^{2T})$\\

$2$ & $(\alpha_1,\alpha_2\omega,\dots,\alpha_n)$ & $(\alpha^2_1,(\alpha_2\omega)^2,\dots,\alpha^2_n)$ & $\dots$  & $(\alpha^{2T}_1,(\alpha_2\omega)^{2T},\dots,\alpha_n^{2T})$ \\

$\vdots$ & $\vdots$ & $\vdots$ & $\vdots$ & $\vdots$\\

$n$&$(\alpha_1,\alpha_2,\dots,\alpha_n\omega)$ & $(\alpha^2_1,\alpha^2_2,\dots,(\alpha_n\omega)^2)$ & $\dots$ & $(\alpha^{2T}_1,\alpha_2^{2T},\dots,(\alpha_n\omega)^{2T})$\\
\end{tabular}
\caption{The evaluation points for the $2T$-$\omega$ good point}\label{tab-2}
\end{table}

%
Our GCD algorithm cannot reconstruct $\gcd(F_1,F_2)$ using the images if $\overrightarrow{\alpha}$ is not a $2T$-$\omega$ good point for $F_1$ and $F_2$.

\section{A GCD algorithm over finite field}

In this section, we present a GCD algorithm for polynomials over finite fields.
Lemma \ref{lm-7} shows that
\begin{equation}\label{eq-2}
\small
G=\gcd(A,B)=\gcd(\MoCont(A),\MoCont(B))\cdot \gcd(\MoPrim(A),\MoPrim(B)),
\end{equation}
where $\gcd(\MoCont(A),\MoCont(B))$ and
$\gcd(\MoPrim(A),\MoPrim(B))$ are the monomial content of $G$ and the monomial primitive part of $G$, respectively.
Based on Equ. (\ref{eq-2}), to compute $G$, our algorithm is mainly divided into three parts.

\begin{description}
\item[Part 1] compute the monomial content of $G$ by computing $$\MoCont(G)=\gcd(\MoCont(A),\MoCont(B)).$$ This part is trivial as all polynomials appearing in the computing are monomials.

\item[Part 2] compute the monomial primitive part of $G$ by computing $$\MoPrim(G)=\gcd(\MoPrim(A),\MoPrim(B)).$$ We explain the framework of this part more details in Section \ref{sec-a3} and summarize it as a subroutine algorithm (given in Section \ref{sec-a2}).

\item[Part 3] multiply the two parts to get the final result $G=\MoCont(G)\cdot \MoPrim(G)$. This part can be converted to the additions of exponents, as $\MoCont(G)$ is a monomial.
\end{description}

\subsection{Framework of our GCD algorithm for monomial primitive polynomials}\label{sec-a3}
Assume $A,B\in \F_q[\X]$ are monomial primitive.
To compute $G=\gcd(A,B)$, we first compute $C=\gcd(A_{(\mathbf{s},y)},B_{(\mathbf{s},y)})$ for some suitable vector $\mathbf{s}\in\N^n$, and then let $y=1$ to obtain $G$ from $C$.
The algorithm is mainly divided into following parts:

\begin{description}
\item[Part a] find a vector $\mathbf{s}\in\N^n$ such that $G_{(\mathbf{s},y)}$ has a maximum isolated term w.r.t $y$ using Theorem \ref{the-4}.

\item[Part b] compute the polynomial $H=\Delta\cdot\gcd(A_{(\mathbf{s},y)},B_{(\mathbf{s},y)})$ by evaluation-interpolation scheme. Here $H$ is the GCD of $A_{(\mathbf{s},y)},B_{(\mathbf{s},y)}$ up to a monomial $\Delta$, and the details will be given below.

\item[Part c] delete $\Delta$ from $H$ to obtain $\gcd(A_{(\mathbf{s},y)},B_{(\mathbf{s},y)})$ and then $\gcd(A,B)$.
\end{description}

We will explain Part $\mathbf{b}$ in more details.
For matching different evaluations, we need to know the leading coefficient (or some other coefficients in a fixed degree). But it is hard to know in advance before we know the exact form of $G$. Luckily, in Part $\mathbf{a}$, we have found a vector $\mathbf{s}$ such that $G_{(\mathbf{s},y)}$ has a maximum isolated term, so the leading coefficient of $G$ w.r.t $y$ is a factor of $(x_1x_2\cdots x_n)^d$, where $d$ is the partial degree bound of $A,B$. So we regard the leading coefficient of $\gcd(A_{(\mathbf{s},y)},B_{(\mathbf{s},y)})$ as $(x_1x_2\cdots x_n)^d$. The result is only different from $\gcd(A_{(\mathbf{s},y)},B_{(\mathbf{s},y)})$ by a monomial factor which can be removed easily.

Denote $F_1=A_{(\mathbf{s},y)}$ and $F_2=B_{(\mathbf{s},y)}$.  Assume $\overrightarrow{\alpha}$ is a good point for $F_1,F_2\in \F_q[\X,y]$. Regard $C=\gcd(F_1,F_2)$ as the polynomial in $y$ with coefficients in $\F_q[\X]$ and assume
$$C=C_\ell y^{e_\ell}+\cdots+C_1y^{e_1},$$
where $C_i\in\F_q[\X]$ and $e_\ell>e_{\ell-1}>\cdots>e_1$. As $C$ has a maximum isolated term w.r.t $y$, $C_\ell$ is a monomial.
The image has the form
$$\gcd(F_1(y,\overrightarrow{\alpha}),F_2(y,\overrightarrow{\alpha}))=y^{e_\ell}+\frac{C_{\ell-1}(\overrightarrow{\alpha})}{C_\ell(\overrightarrow{\alpha})}y^{e_{\ell-1}}+\cdots+\frac{C_1(\overrightarrow{\alpha})}{C_\ell(\overrightarrow{\alpha})}y^{e_1}.$$
Let
\begin{eqnarray}\label{eq-3}
&&H=\Delta\cdot C =
 H_\ell y^{e_\ell}+H_{\ell-1}y^{e_{\ell-1}}+\cdots+H_1y^{e_1},
%
\end{eqnarray}
where $\Delta=\frac{(x_1\cdots x_n)^d}{C_\ell}$,
$H_i=\frac{(x_1\cdots x_n)^d}{C_\ell}C_i$, and in particular $H_\ell=(x_1\cdots x_n)^d$.
As $d$ is a partial degree bound of $A,B$, $C_\ell$ divides $(x_1\cdots x_n)^d$ and $\Delta$ is a monomial.
%
So we have
\begin{eqnarray}\label{eq-H3}
H(y,\overrightarrow{\alpha})=(\alpha_1\cdots \alpha_n)^d\cdot\gcd(F_1(y,\overrightarrow{\alpha}),F_2(y,\overrightarrow{\alpha})).
\end{eqnarray}
We can evaluate $H_i\in\F_q[\X],i=1,2,\dots,\ell-1$ at the point $\overrightarrow{\alpha}$. Varying $\overrightarrow{\alpha}$, we can recover all $H_i$'s from the evaluations by interpolation.

Part $\mathbf{b}$ can be  divided mainly into four steps.
\begin{itemize}
\item compute all term bounds $T_i$'s of all coefficients of $\gcd(A_{(\mathbf{s},y)},B_{(\mathbf{s},y)})$ w.r.t. $y$ using the technique of early termination in section \ref{sec-et};

\item  diversify all the coefficients of $\gcd(A_{(\mathbf{s},y)},B_{(\mathbf{s},y)})$ w.r.t. $y$
using the method given in section \ref{sec-dp} and
find a good point $\overrightarrow{\alpha}$ using the method given in section \ref{sec-gp};
\item  evaluate all the coefficients of $H(y,\overrightarrow{\alpha}^i),H(y,\overrightarrow{\alpha}_k^i)$ in \eqref{eq-H3},
which is possible because $\gcd(F_1(y,\overrightarrow{\alpha}^i),F_2(y,\overrightarrow{\alpha}^i)),\gcd(F_1(y,\overrightarrow{\alpha}_k^i),F_2(y,\overrightarrow{\alpha}_k^i))$ are monic;
\item interpolate all the coefficients of $H=\Delta\cdot\gcd(A_{(\mathbf{s},y)},B_{(\mathbf{s},y)})$ in \eqref{eq-3} from the evaluations by sparse polynomial interpolation.
\end{itemize}
\subsection{Primitive GCD algorithm}
\label{sec-a2}

We give a GCD algorithm in $\F_q[\X]$ for primitive polynomials.

\begin{algorithm}\label{alg-3}
Primitive GCD over finite fields

{\noindent\bf Input:}
\begin{itemize}
\item Two monomial primitive polynomials $A,B\in \F_q[\X]$.
\item A primitive element $\omega$ of $\F_q$.
\item A tolerance $\epsilon$.
\end{itemize}

{\noindent\bf Output:} $G=\gcd(A,B)$ with probability $\geq 1-\epsilon$; or ``Failure."

{\noindent Initial}

\begin{description}
\item[Step 0] Let $d=\max\{\deg_{x_i}A,\deg_{x_i}B,i=1,2,\dots,n\}$ and $D=\max\{\deg A,$ $\deg B\}$.
\item[Step 1] If $q<D$, then find an irreducible polynomial $\Upsilon(z)$ over $\F_q[z]$ of degree $k\geq \frac{\log D}{\log q}$. Construct finite field $\F_{q^k}$ as $\F_q[z]/(\Upsilon)$.
    For the convenience of description, in the following, we still denote $\F_{q^k}$ as $\F_q$. Find a primitive root of $\F_{q^k}$ and still denote it $\omega$.
\end{description}

{\noindent\bf Stage \uppercase\expandafter{\romannumeral1}: Find a vector $\mathbf{s}$ such that at least one of $A_{(\mathbf{s},y)},B_{(\mathbf{s},y)}$ has a maximum isolated term.}
\begin{description}
\item[Step 2] Let $N=2\min\{T_A-1,T_B-1\}$. Randomly choose $\mathbf{s}\in [1,N]^n$.  If both of $A_{(\mathbf{s},y)},B_{(\mathbf{s},y)}$ do not have a maximum isolated term, then repeat Step 2.
\item[Step 3] Set $F_1:=A_{(\mathbf{s},y)},F_2:=B_{(\mathbf{s},y)}$.
\end{description}

{\noindent\bf Stage \uppercase\expandafter{\romannumeral2}:
Find terms bound for all coefficients of $\Delta\cdot\gcd(A_{(\mathbf{s},y)},B_{(\mathbf{s},y)})$.}

\begin{description}
\item[Step 4] Find an irreducible polynomial $\Phi(z)$ over $\F_q[z]$ of degree $$r= \lceil\frac{\log\frac{1}{\varepsilon}+\log 86+2n\log (d+1)+2\log (nd)+\log \|\mathbf{s}\|_{\infty}}{\log q}\rceil.$$ Construct finite field $\F_{q^r}$ as $\F_q[z]/(\Phi)$.
\item[Step 5] Set $T:=1$. Randomly choose $\overrightarrow{\sigma}=(\sigma_1,\dots,\sigma_n)\in \F_{q^r}^{*n}$.
\item[Loop]
\item[Step 6]
For $i=T,T+1,\dots,2T-1$
\begin{description}
\item[a]If one of $\LC_y(F_1)(\overrightarrow{\sigma}^{i})$, $\LC_y(F_2)(\overrightarrow{\sigma}^{i})$ is zero, then return ``Failure."

\item[b]Compute the monic GCD of $F_1(y,\overrightarrow{\sigma}^{i})$ and $F_2(y,\overrightarrow{\sigma}^{i})$. Let
 $$\eta'_i:=\gcd(F_1(y,\overrightarrow{\sigma}^i),F_2(y,\overrightarrow{\sigma}^i)).$$
If one of $\eta'_i$ has different degree with others, then return ``Failure."
    \end{description}

\item[Step 7] Multiply $\eta'_i$ by the leading coefficient $(\sigma_1\cdots \sigma_n)^{i\cdot d}$.

For $i=T,T+1,\dots,2T-1$ do
$$\eta_{i}:=\eta'_i\cdot (\sigma_1\cdots \sigma_n)^{i\cdot d}.$$
Assume
    $$\eta_i:=c_{i,1}y^{e_1}+\cdots+c_{i,\ell}y^{e_\ell},i=1,\dots,2T-1.$$

\item[Step 8]
Construct Hankel matrices $\mathbf{H}_k:=(c_{i+j-1,k})_{i,j=1,\dots,T},k=1,2,\dots,\ell-1$.
If one of $\det(\mathbf{H}_j)$ is not zero for $j=1,2,\dots,\ell-1$, then $T:=2T$, and goto {\bf Loop}.
Write down $T_i$ for each $y^{e_i}$, which is the first $T$ for $\det(\mathbf{H}_i)=0$.

\item[Step 9] Let $T:=T-1$; and $T_i=T_i-1,i=1,2,\dots,\ell-1$.
\end{description}

{\noindent\bf Stage \uppercase\expandafter{\romannumeral3}: Choose good evaluation points and diversify the GCD.}
\begin{description}
\item[Step 10] Find an irreducible polynomial $\Phi'(z)$ over $\F_q[z]$ of degree $$m\geq \lceil\frac{\log\frac{1}{\varepsilon}+\log 42+\log (n+1)+2\log (ndT)}{\log q}\rceil.$$ Construct finite field $\F_{q^m}$ as $\F_q[z]/(\Phi')$.
\item[Step 11]
Randomly choose $\overrightarrow{\zeta}=(\zeta_1,\dots,\zeta_n)\in \F_{q^m}^{*n}$ and $\overrightarrow{\alpha}=(\alpha_1,\dots,\alpha_n)\in \F_{q^m}^{*n}$.
/* It will be proved that $H$ is diverse w.r.t. $y$ by $\overrightarrow{\zeta}$
and $\overrightarrow{\alpha}$ is a good point.*/

\item[Step 12]Compute $\widetilde{A}=A(\zeta_1x_1,\dots,\zeta_nx_n)$, $\widetilde{B}=B(\zeta_1x_1,\dots,\zeta_nx_n)$.
Compute $\widetilde{F}_1=(\widetilde{A})_{(\mathbf{s},y)}$, $\widetilde{F}_2=(\widetilde{B})_{(\mathbf{s},y)}$.
\end{description}

{\noindent\bf Stage \uppercase\expandafter{\romannumeral4}: Evaluate the GCD.}
\begin{description}
\item[Step 13]
For $i=1,2,\dots,2T$
\begin{description}
\item[a]If one of $\LC_y(\widetilde{F}_1)(\overrightarrow{\alpha}^{i})$, $\LC_y(\widetilde{F}_2)(\overrightarrow{\alpha}^{i})$, $\LC_y(\widetilde{F}_1)(\overrightarrow{\alpha}_k^{i})$, and $\LC_y(\widetilde{F}_2)(\overrightarrow{\alpha}_k^{i}),k=1,2,\dots,n$ is zero, then return ``Failure."

\item[b]Compute the monic univariate GCD of $\widetilde{F}_1(y,\overrightarrow{\alpha}^{i})$ and $\widetilde{F}_2(y,\overrightarrow{\alpha}^{i})$.
 $$f'_i:=\gcd(\widetilde{F}_1(y,\overrightarrow{\alpha}^i),\widetilde{F}_2
    (y,\overrightarrow{\alpha}^i)).$$
\item[c]Compute the monic univariate GCD of $\widetilde{F}_1(y,\overrightarrow{\alpha}_k^{i})$ and $\widetilde{F}_2(y,\overrightarrow{\alpha}_k^{i})$ for $k=1,2,\dots,n$ $$g'_{i,k}:=\gcd(\widetilde{F}_1(y,\overrightarrow{\alpha}_k^{i}),\widetilde{F}_2(y,\overrightarrow{\alpha}_k^{i})).$$
    \end{description}
 \end{description}

\begin{description}
\item[Step 14] Multiply $f'_i,g'_{i,k}$ by the leading coefficients $(\alpha_1\cdots \alpha_n)^{i\cdot d}$ and $(\alpha_1\cdots \alpha_n\cdot\omega)^{i\cdot d}$.

For $i=1,2,\dots,2T$ do
$$f_{i}:=f'_i\cdot (\alpha_1\cdots \alpha_n)^{i\cdot d}=c_{i,1}y^{e_1}+\cdots+c_{i,\ell}y^{e_\ell}$$
$$g_{i,k}:=g'_{i,k}\cdot (\alpha_1\cdots \alpha_n\cdot\omega)^{i\cdot d}=r_{i,k,1}y^{e_1}+\cdots+r_{i,k,\ell}y^{e_\ell}$$
for all $k=1,2,\dots,n$.
\end{description}

{\noindent\bf Stage \uppercase\expandafter{\romannumeral5}: Compute GCD by interpolation.}
\begin{description}
\item[Step 15]
For $j=1,2,\dots,\ell-1$,
compute the polynomials by Algorithm \ref{alg-2}:

$H'_j:={\rm Interpolation}(\omega,c_{i,j},r_{i,k,j},i=1,\dots,2T_i,k=1,2,\dots,n)$.

Set $H_j:=H'_j(\zeta_1^{-1}x_1,\dots,\zeta_n^{-1}x_n)$ and $H_\ell:=(x_1\cdots x_n)^d$.
\end{description}

{\noindent\bf Stage \uppercase\expandafter{\romannumeral6}: Compute the monomial primitive part.}
\begin{description}
\item[Step 16] For $i=1,\dots,n$, let $k_i = \min\{\deg(H_j,x_i),j=1,\dots,\ell\}$
  \newline /* $x_1^{k_1}\cdots x_n^{k_n}$ is the monomial content $\Delta$ of $\sum_{i=1}^{\ell}H_i$. $*/$.

\item[Step 17] Return the primitive part $(\sum_{i=1}^{\ell}H_i)/(x_1^{k_1}\cdots x_n^{k_n})$.
\end{description}

\end{algorithm}

\begin{theorem}\label{the-7}
Let $A,B\in \F_q[\X]$ be monomial  primitive polynomials and $\omega\in\F_q$ a fixed primitive root. Then Algorithm \ref{alg-3} is correct.
\begin{enumerate}
\item With probability $\geq 1-\varepsilon$, it returns the correct GCD and the complexity is $O^\sim(nDT_G(T_A+T_B)\log^2\frac{1}{\varepsilon} \log^2 q)$ bit operations.
\item The expected complexity is $O^\sim(n^3DT_G(T_A+T_B) \log^2 q)$ bit operations.
\end{enumerate}
\end{theorem}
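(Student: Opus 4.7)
The plan is to analyze the six stages of Algorithm~\ref{alg-3} separately, bounding both the failure probability and the bit cost of each, then combining them by the union bound and by summation. The overall structure of the proof mirrors the stage structure of the algorithm: Stages I--III collectively yield a favorable random configuration with probability $\ge 1-\varepsilon$, while Stages IV--VI are deterministic given that configuration.

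For \emph{correctness}, Stage I succeeds with probability $\ge 1/2$ per trial by Theorem~\ref{the-4}, so $O(\log(1/\varepsilon))$ trials suffice to find an $\s$ with failure $\le \varepsilon/c$ for a small constant $c$. For such an $\s$, the leading coefficient $C_\ell$ of $C=\gcd(F_1,F_2)$ in $y$ is a monomial dividing $(x_1\cdots x_n)^d$, which justifies working with the scaled GCD $H=\Delta\cdot C$ of~(\ref{eq-3}) whose leading coefficient is exactly $(x_1\cdots x_n)^d$. Stage II applies the early-termination Hankel test of Section~\ref{sec-et}; each $\det\HK_s$ has degree bounded via Lemma~\ref{the-14} and \cite{KaltofenL03}, and the extension degree $r$ is chosen so that Schwartz--Zippel (Lemma~\ref{lm-6}) plus a union bound over the doubling loop keeps failure $\le\varepsilon/c$. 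Stage III performs two random draws: $\overrightarrow\zeta$ diversifies the coefficients of $H$ w.r.t.\ $y$ by the argument of Section~\ref{sec-dp}, while $\overrightarrow\alpha$ must make all $2(n+1)T$ points of Definition~\ref{def-3} good. By Lemma~\ref{the-13} the latter reduces to nonvanishing of $L=R\cdot\LC_y(\widetilde F_1)\cdot\LC_y(\widetilde F_2)$, whose degree is controlled by Lemma~\ref{the-14}; the extension degree $m$ is tuned so a union bound over all these events again contributes $\le\varepsilon/c$.

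Given these random choices succeed, Stage IV computes univariate GCDs at the good points and by~(\ref{eq-H3}) the monicized, rescaled values $f_i,g_{i,k}$ are precisely the designed evaluations of the coefficients of $H$. Because the diversified $H$ satisfies Assumption~\ref{ass-2} for each coefficient of $y^{e_j}$, Stage V applies Algorithm~\ref{alg-2} separately to each coefficient and reconstructs $H'_j$; the inverse substitution $x_i\mapsto\zeta_i^{-1}x_i$ produces $H_j$. The sum $\sum_j H_j y^{e_j}$ equals $\Delta\cdot C$, and Stage VI strips the monomial content $\Delta$ to return $\MoPrim(C)\approx\MoPrim(G_{(\s,y)})=G_{(\s,y)}$ by Lemma~\ref{the-5} and the monomial-primitive hypothesis on $A,B$; setting $y=1$ recovers $G=\gcd(A,B)$.

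For the \emph{bit complexity}, the dominant costs are: (a) the $O(nT_G)$ univariate GCDs of Stage IV, each of degree $O(\|\s\|_\infty D)=O(T_{\min}D)$ over $\F_{q^m}$ with $T_{\min}=\min(T_A,T_B)$, contributing $O^\sim(nT_G T_{\min}D)$ field operations; (b) the evaluations of $\widetilde F_1,\widetilde F_2$ at these points, costing $O^\sim(nT_G(T_A+T_B))$ field operations after precomputing powers of $\overrightarrow\alpha$; and (c) the $\ell\le T_G$ calls to Algorithm~\ref{alg-2}, bounded by Theorem~\ref{the-2}. Each $\F_{q^m}$ operation takes $O^\sim(m\log q)$ bits with $m=O(\log(1/\varepsilon)/\log q)$; absorbing $T_{\min}D\le (T_A+T_B)D$ and collecting factors gives the claimed $O^\sim(nDT_G(T_A+T_B)\log^2(1/\varepsilon)\log^2 q)$ bit bound. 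The expected complexity then follows by setting $\varepsilon = 1/(nD(d+1)^n(T_A+T_B)\log^2 q)$ so that $\log^2(1/\varepsilon)=O^\sim(n^2)$, whereupon the $\le\varepsilon$-probability worst-case loop of length $(d+1)^n$ contributes only $O(1)$ to the expectation, yielding $O^\sim(n^3DT_G(T_A+T_B)\log^2 q)$.

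The main obstacle will be the careful bookkeeping of failure probabilities across all stages: bounding the degrees of the exceptional polynomials (Hankel determinants, resultants, leading coefficients, and the diversification factor $\prod(M_i-M_j)\prod(\omega^{e_{i,k}}M_i-\omega^{e_{j,k}}M_j)$ of Assumption~\ref{ass-2}) precisely enough so that the union-bound failure over all stages is $\le\varepsilon$, yet keeping the extension degrees $r$ and $m$ small enough (namely $O(\log(1/\varepsilon)/\log q)$, up to additive $O^\sim(\log n+\log d+\log T)$) that the bit cost does not exceed the stated target.
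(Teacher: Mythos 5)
Your proposal follows the paper's proof essentially verbatim: the failure probability is controlled stage-by-stage via Schwartz--Zippel applied to condition polynomials (Hankel determinants for Stage II, the diversification and good-point polynomials for Stages III--V), the bit cost is summed over stages with the dominant contributions coming from the $O(nT)$ univariate GCDs, the evaluations, and Theorem~\ref{the-2}, and part~(2) follows by the same choice of $\varepsilon$. One quantitative slip to watch: the extension degree $r$ chosen in Step~4 is $O\bigl((\log(1/\varepsilon)+n\log(d+1))/\log q\bigr)$ --- the additive term grows like $n\log d$, not $O^\sim(\log n+\log d+\log T)$ as you state --- and absorbing the resulting $n^2$-type contribution (in particular Shoup's $O^\sim(r^2\log q)$ cost of constructing the irreducible polynomial) into the claimed bound relies on Lemma~\ref{lm-5} ($Dt\geq n$), which the paper invokes explicitly and your sketch omits. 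Your treatment of Stage~I, bounding the number of Las Vegas trials by $O(\log(1/\varepsilon))$ to get a with-high-probability time bound, is in fact a small improvement over the paper, which only bounds the expected number of trials there.
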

\begin{proof}
The proof  is given in Section \ref{sec-5-2}.
\end{proof}
In Stage \uppercase\expandafter{\romannumeral1}, we find a suitable vector $\mathbf{s}$ for $A,B$ such that at least one of $A_{(\mathbf{s},y)},B_{(\mathbf{s},y)}$ has a maximum isolated term.
In Stage \uppercase\expandafter{\romannumeral2}, we compute the terms bound $T_i$ of coefficients of $\Delta\cdot\gcd(A_{(\mathbf{s},y)},B_{(\mathbf{s},y)})$ in $y$ by using the technique of early termination.
In Stage \uppercase\expandafter{\romannumeral3}, we diversify all the coefficients of $\Delta\cdot\gcd(A_{(\mathbf{s},y)},B_{(\mathbf{s},y)})$ w.r.t. $y$ by using  $\overrightarrow{\zeta}=(\zeta_1,\dots,\zeta_n)$ and choose the good evaluation point $\overrightarrow{\alpha}$.
In Stage \uppercase\expandafter{\romannumeral4}, we evaluate all the coefficients of $\Delta\cdot\gcd(A_{(\mathbf{s},y)},B_{(\mathbf{s},y)})$ at points $\overrightarrow{\alpha}^i,i=1,2,\dots,2T$ and $\overrightarrow{\alpha}_k^i,i=1,\dots,2T,k=1,\dots,n$.
In Stage \uppercase\expandafter{\romannumeral5}, we interpolate all the coefficients of $\Delta\cdot\gcd(A_{(\mathbf{s},y)},B_{(\mathbf{s},y)})$ at points $\overrightarrow{\alpha}^i,i=1,2,\dots,2T$ and $\overrightarrow{\alpha}_k^i,i=1,\dots,2T,k=1,\dots,n$ by sparse polynomial interpolation.
In Stage \uppercase\expandafter{\romannumeral6}, we remove the factor $\Delta$ from $\Delta\cdot\gcd(A_{(\mathbf{s},y)},B_{(\mathbf{s},y)})$ by computing the monomial content and return the GCD of $A,B$.

\subsection{GCD algorithm for polynomials over finite fields}
Based on Algorithm \ref{alg-3}, we give the complete  GCD algorithm  polynomials over finite fields.

\begin{algorithm}\label{alg-4}
GCD over finite fields

{\noindent\bf Input:}
\begin{itemize}
\item $A,B\in \F_q[\X]$.
\item A primitive element $\omega$ of $\F_q$.
\item A tolerance $\epsilon$.
\end{itemize}

{\noindent\bf Output:} $G=\gcd(A,B)$ with probability $\geq 1-\epsilon$; or ``Failure."

\begin{description}
\item[Step 1] Compute the monomial contents and the primitive parts of $A$ and $B$, and denote them by $C_A:=\MoCont(A),C_B:=\MoCont(B),P_A:=\MoPrim(A),P_B:=\MoPrim(B)$.
\item[Step 2] Compute the GCD $G'=\gcd(P_A,P_B)$ by Algorithm \ref{alg-3} with tolerance $\epsilon$.

\item[Step 3] Compute the GCD $C'=\gcd(C_A,C_B)$.
\item[Step 4] Return $G'\cdot C'$.
\end{description}

\end{algorithm}

\begin{theorem}
Let $A,B\in \F_q[\X]$ and $\omega\in\F_q$ a fixed primitive root. Then Algorithm \ref{alg-4} is correct.
\begin{enumerate}
\item With probability $\geq 1-\varepsilon$, it returns the correct GCD $G=\gcd(A,B)$ and the complexity is $O^\sim(nDT_G(T_A+T_B)\log^2\frac{1}{\varepsilon} \log^2 q)$ bit operations.
\item The expected complexity is $O^\sim(n^3DT_G(T_A+T_B) \log^2 q)$ bit operations.
\end{enumerate}
\end{theorem}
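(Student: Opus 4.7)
The plan is to reduce the theorem to Theorem \ref{the-7} via the multiplicative decomposition
$G=\MoCont(G)\cdot\MoPrim(G)$ established in Lemma \ref{lm-7}, and then verify that
the extra bookkeeping steps (computing monomial contents, taking the GCD of two monomials,
and multiplying the final result by a monomial) are dominated by the cost of the primitive-part
computation.

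First I would verify correctness. By Lemma \ref{lm-7},
$\gcd(A,B) = \gcd(\MoCont(A),\MoCont(B))\cdot\gcd(\MoPrim(A),\MoPrim(B))$.
Step 1 extracts $C_A, C_B, P_A, P_B$ deterministically. Step 2 calls
Algorithm \ref{alg-3} on $(P_A,P_B)$, and by Theorem \ref{the-7}(1) returns the correct
$G'=\gcd(P_A,P_B)$ with probability at least $1-\varepsilon$. Step 3 computes
$C'=\gcd(C_A,C_B)$ exactly (this is just a termwise minimum of exponent vectors on a single
monomial). Thus on the success event of Step 2, Step 4 outputs the correct GCD,
and the total failure probability is at most $\varepsilon$.

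Next I would bound the bit complexity. Step 1 amounts to scanning the exponent vectors of
$A$ and $B$ to extract the entrywise minimum exponent vector and then dividing through,
which costs $O^\sim(n(T_A+T_B)\log q)$ bit operations. Step 3 is a single GCD of two monomials,
done by entrywise minimum of two exponent vectors in $\widetilde{O}(n)$ bit operations.
Step 4 multiplies $G'$ by a monomial, which is $O^\sim(nT_G\log q)$.
All of these are dominated by the cost bound
$O^\sim(nDT_G(T_A+T_B)\log^2\frac{1}{\varepsilon}\log^2 q)$ from Theorem \ref{the-7}(1).
Here one also has to observe that the number of terms and the degree parameters of
$P_A, P_B$ satisfy $T_{P_A}\le T_A$, $T_{P_B}\le T_B$, $T_{G'}\le T_G$, $\deg P_A\le \deg A$,
$\deg P_B\le \deg B$, so the bound applied in Theorem \ref{the-7} is at worst the one stated.

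Finally, the expected complexity statement follows by the same reduction applied to part (2)
of Theorem \ref{the-7}: Steps 1, 3, 4 contribute only lower-order terms, so the expected total
cost is $O^\sim(n^3DT_G(T_A+T_B)\log^2 q)$ bit operations. The main (and only nontrivial)
obstacle is already absorbed into Theorem \ref{the-7}; here the work is purely the routine check
that the monomial-part manipulations are cheap compared to the primitive-part GCD, so no
new probabilistic or algebraic argument is required.
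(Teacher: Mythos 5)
Your proposal is correct and takes essentially the same route as the paper: reduce via Lemma~\ref{lm-7} to Theorem~\ref{the-7} on the monomial-primitive parts, and check that extracting/recombining the monomial contents is cheap (a minor discrepancy is that the paper bounds the auxiliary steps in terms of $\log D$ for exponent arithmetic rather than $\log q$, but both are dominated by the main term). Your additional remark that $T_{P_A}\le T_A$, $T_{G'}\le T_G$, etc.\ is the implicit bookkeeping the paper also relies on, and in fact these hold with equality since dividing by a monomial preserves term count.
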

\begin{proof}
The correctness comes from the Equ. (\ref{eq-2}). Once $G'$ is computed correctly in Step 2, Algorithm \ref{alg-4} returns the correct polynomial. According to Theorem \ref{the-7}, we compute the correct $\gcd(P_A,P_B)$ with probability $\geq 1-\varepsilon$. So the correctness is proved.

Now we analyse the complexity.
In Step 1, since $\MoCont(A)$ is a monomial, to compute $\MoCont(A)$, it suffices to find each exponent of $x_i$,  which is equivalents to finding the minimum degrees of $x_i$'s in $A$. So the cost is $O(nT_A\log D)$ bit operations. As $\MoPrim(A)=A/\MoCont(A)$, to compute $\MoPrim(A)$, just subtract the exponents of $\MoCont(A)$ from the exponents of each term of $A$, which costs $O(nT_A\log D)$ bit operations. Similarly, the cost of computing $\MoCont(B)$ and $\MoPrim(B)$ is $O(nT_B\log D)$.
So the total cost is $O(n(T_A+T_B)\log D)$ bit operations.
In Step 2, by Theorem \ref{the-7}, the complexity is $O^\sim(nDT_G(T_A+T_B)\log^2\frac{1}{\varepsilon} \log^2 q)$ bit operations.
In Step 3, computing the GCD of two monomials is equivalent to comparing the exponent of each $x_i$ of $C_A$ and $C_B$, and the smaller one is the exponent of the GCD about $x_i$. So the cost is $O(n\log D)$ bit operations.
In Step 4, to compute the product, we can directly add the exponents of $C'$ to the exponents of all terms of $G'$, which requires $O(nT_G\log D)$ bit operations.

\end{proof}

\section{Experimental results}\label{sec-exp}
In this section, the practical performance  of our GCD
algorithm for polynomials over finite fields are given.
We compare with the default implementation of the GCD algorithm in Maple 2018.
%
The data are collected on a desktop with Windows system,
2.50GHz Core i5 processor and 8GB RAM memory.
The codes can be found in
https://github.com/huangqiaolong/Maple-Codes-GCD.
%

To test the average running times of the algorithm, we use the Maple command $randpoly$
to construct five pairs of random co-prime polynomials $A,B\in
\F_p[\X]$ and a polynomial $G$ within the given terms bound and degree bound, then expand $A\cdot G$ and $B\cdot G$ and compute $G=\gcd(A\cdot G,B\cdot G)$
with our algorithm and the Maple command ${\rm Gcd}(A,B) \mod p$.
The average times are collected. In our testing, we fix $p=10000019$ and use the primitive element $\omega = 6$. In our code, we do not use the expansion of finite fields.  A simple analysis shows that the success rate is $\geq 1-\frac{86n^2t^2d^2\min\{t_A,t_B\}+168(n+1)t^2n^2d^2\min\{t_A,t_B\}}{p}$,
where $t_A:=\#(A\cdot G)$, $t_B=\#(B\cdot G)$, $t=\#G$ and $d$ is the partial degree bound of $A\cdot G$ and $B\cdot G$.

Three benchmarks are used for the experiments and the results are given in Figures \ref{figrt}-\ref{figrd2},
where  the red lines are the timings of our algorithm  and the black lines are the timings of the Maple code.

{\bf Benchmark 1.}
For the first benchmark, we fix the degrees of $A,B,G$ for
$n=6$, $\deg A=\deg B=\deg G=30$, and change $\# A=\#B=\#G=T$ from $2$ to $152$.
The computing times are shown in Figure \ref{figrt},
where we take 60 seconds as the threshold: once exceeding 60 seconds, we terminate the computing.
From this figure, we can see that
the Maple code can compute GCDs with terms up to 18
and our code can compute GCDs with terms up to 150.
%

{\bf Benchmark 2.}
For the second benchmark, we fix the terms of $A,B,G$ for
$\# A=\#B=\#G=30$, and the degrees of $A,B,G$ for
$\deg A=\deg B=\deg G=100$, and change $n$ from $1$ to $200$.
The computing times are shown in Figure \ref{figrn},
where the threshold is set to be  60 seconds.
From this figure, we can see that
the Maple code can compute GCDs with numbers of variables up to 3
and our code can compute GCDs with numbers of variables up to 200,
so the new algorithm has about 2-orders of magnitude improvement.
The computing time of the Maple code increases rapidly when $n>3$
and is $\ge 600s$ for $n\ge10$.

{\bf Benchmark 3.}
For the third benchmark, we fix the terms of $A,B,G$ for
$n=6$, $\# A=\#B=\#G=30$, and change $D=\deg(A)=\deg(B)=\deg(G)$ from $5$ to $29525$ in increments of $500$.
The computing times are shown in Figure \ref{figrd1},
where the threshold is set to be  100 seconds.
From this figure, we can see that
the Maple code can compute GCDs with degrees up to 23
and our code can compute GCDs with degrees up to 29525,
so the new algorithm has about 3-orders of magnitude improvement.
In Figure \ref{figrd2}, we give  more details by changing $D=\deg(A)=\deg(B)=\deg(G)$ from $5$ to $14$. Figure \ref{figrd2} shows that $D=8$ is the intersection point. When $D<8$, the Maple code is better than ours, but when $D\geq 8$, our algorithm costs less and increase slowly until $d=29525$.
The timings for the Maple code increase drastically after $D\ge 23$.
%
%
These experimental results also validate the complexities in Table \ref{tab-1}.

\begin{figure}[!hptb]
\begin{minipage}[t]{0.50\linewidth}
\centering
\includegraphics[scale=0.18]{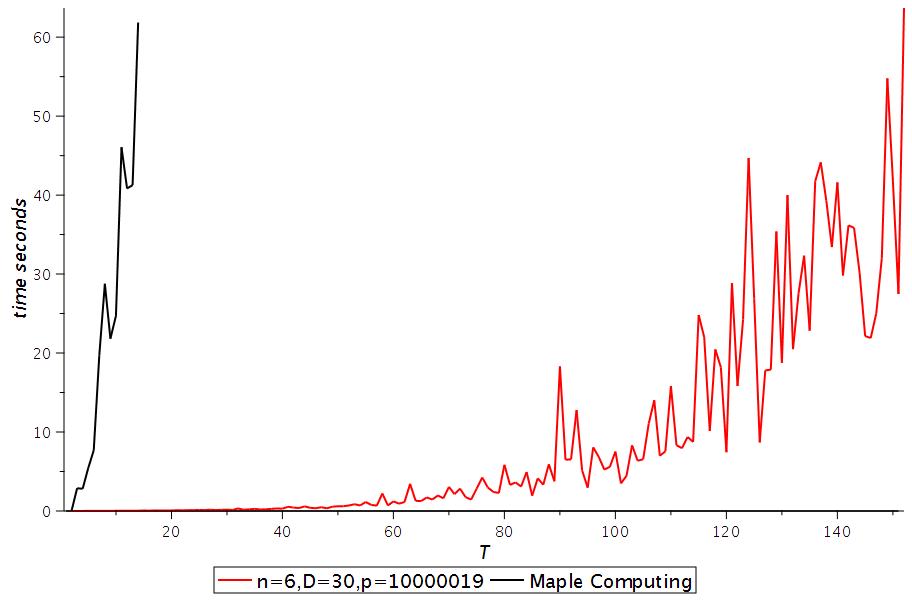}
\caption{Average running time  with varying terms}
\label{figrt}
\end{minipage}
\begin{minipage}[t]{0.44\linewidth}
\centering
\includegraphics[scale=0.18]{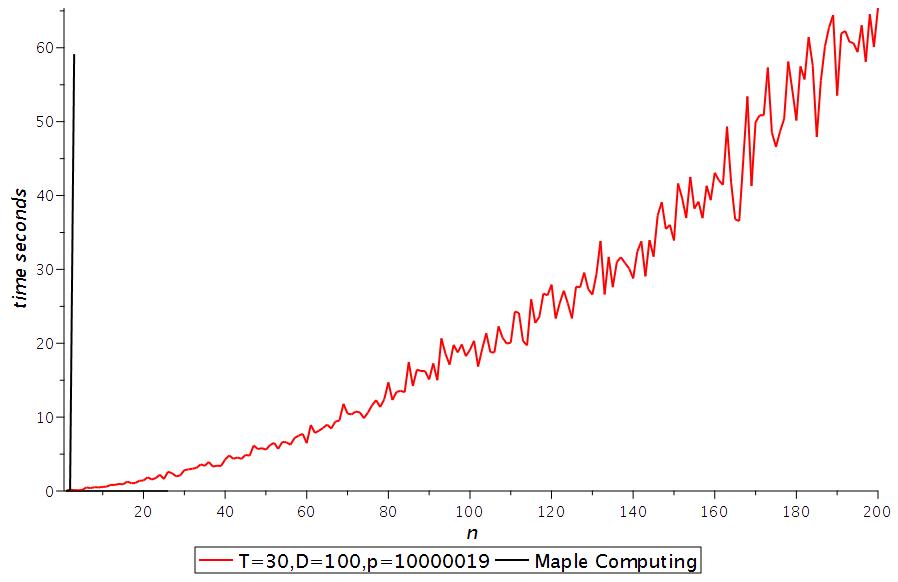}
\caption{Average running time with varying number of variables}\label{figrn}
\end{minipage}
\end{figure}

\begin{figure}[!hptb]
\begin{minipage}[t]{0.48\linewidth}
\centering
\includegraphics[scale=0.18]{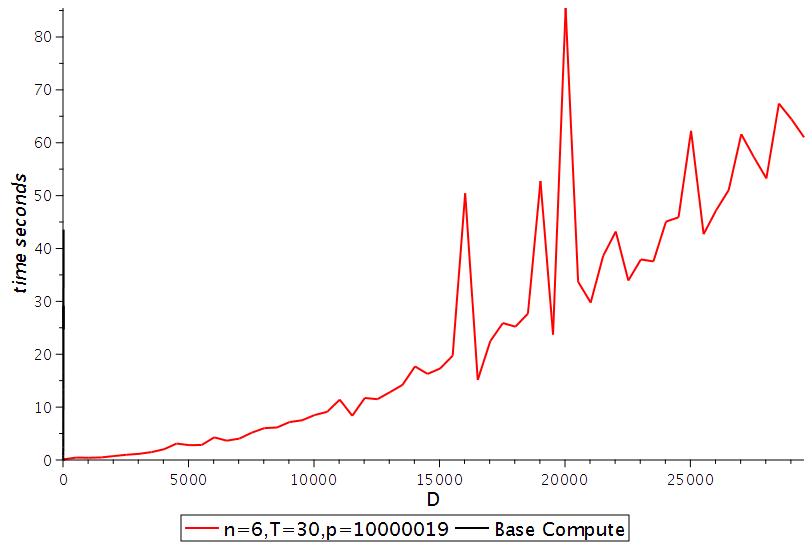}
\caption{Average running time with varying degree: the global picture for all degrees} \label{figrd1}
\end{minipage}
\begin{minipage}[t]{0.49\linewidth}
\centering
\includegraphics[scale=0.19]{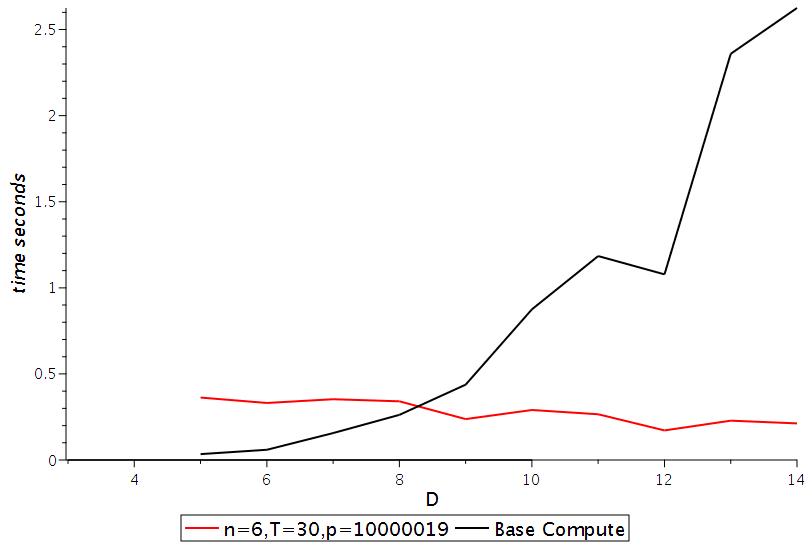}
\caption{Average running time with varying degree: the picture with degrees $\le 14$} \label{figrd2}
\end{minipage}
\end{figure}

\begin{remark}
In our Maple code, in Step 2 of Algorithm \ref{alg-3}, to isolate the maximum term of $A$ or $B$, we actually let $N=1,2,2^2,2^3,\dots$ because the success rate of isolation is very high in practice even if $N$ is small.
In Step 8, we actually let $T=1,2,3,\dots$ and increase $T$ by only $1$ each time, as the cost of computing the GCD of two univariate polynomials is more expensive than testing the Hankel matrices.
\end{remark}

\section{Proof of Theorem \ref{the-7}}\label{sec-5-2}
We first prove some lemmas.

\begin{lemma}\label{lm-5}
Let $A\in\F_q[\X]$,  $D=\deg A$, and $t=\#A$.
If $A$ contains all variables $x_i$'s, then $Dt\geq n$.
\end{lemma}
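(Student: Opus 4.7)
The plan is to bound the number of distinct variables that can appear in a single term of total degree at most $D$, and then apply a counting argument.

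First I would observe that any monomial $x_1^{a_1} \cdots x_n^{a_n}$ with total degree $\sum_{i=1}^n a_i \leq D$ can have at most $D$ indices $i$ with $a_i \geq 1$, because each such index contributes at least $1$ to the degree sum. Hence each term of $A$ involves at most $D$ distinct variables.

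Next I would set up a double-counting argument. Let $S = \{(i,j) : 1\leq i\leq n,\ 1\leq j\leq t,\ x_i \text{ appears in the } j\text{-th term of } A\}$. From the observation above, each of the $t$ terms contributes at most $D$ pairs to $S$, so $|S| \leq tD$. On the other hand, since $A$ contains every variable $x_i$, each $i\in\{1,\ldots,n\}$ must appear in at least one term, giving $|S| \geq n$. Combining the two bounds yields $tD \geq n$, which is the desired inequality.

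I do not expect any significant obstacle: the proof is essentially a one-line pigeonhole argument, and no probabilistic reasoning, diversification, or resultant machinery from the earlier sections is needed. The only thing to be careful about is the initial bound on the number of distinct variables in a single monomial, which follows immediately from $a_i\in\N_{\geq 1}$ whenever $x_i$ appears.
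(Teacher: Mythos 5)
Your proof is correct and follows essentially the same counting argument as the paper: each of the $t$ monomials involves at most $D$ distinct variables (the paper writes $k_i$ for this count, with $k_i \leq D$ implicit), and summing gives $Dt \geq \sum_i k_i \geq n$. Your double-counting formalization of the same idea is fine.
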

\begin{proof}
Assume $A=c_1M_1+\cdots+c_tM_t$ and each $M_i$ contains $k_i$ different variables.
Then $Dt\geq k_1+k_2+\cdots+k_t$. As $A$ contains all variables $x_i$'s, $k_1+k_2+\cdots+k_t\geq n$.
So we have $Dt\geq n$.
\end{proof}

Let $H=\Delta\cdot \gcd(A_{(\mathbf{s},y)},B_{(\mathbf{s},y)})$ defined in \eqref{eq-3} have the form $H=H_\ell y^{e_\ell}+H_{\ell-1}y^{e_{\ell-1}}+\cdots+H_1y^{e_1}$, where $H_i\in \F_q[\X]$, $e_1<e_2<\cdots<e_{\ell}$ and $\deg H_i\leq D+nd$.
\begin{lemma}\label{lm-3}
In Stage \uppercase\expandafter{\romannumeral2}, Algorithm \ref{alg-3} returns correct bounds $T_i$'s, which satisfy $\#H_i\leq T_i< 2\#H_i$, with probability $\geq 1-\frac{\varepsilon}{2}$.
\end{lemma}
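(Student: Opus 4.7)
I would decompose the failure event in Stage II into two parts: \textbf{(A)} some evaluation point $\overrightarrow{\sigma}^i$ used in the loop is not a good point for $(F_1,F_2)$, so that $\eta_i \neq H(y,\overrightarrow{\sigma}^i)$ and the coefficients $c_{i,k}$ fed into the Hankel matrices are wrong; \textbf{(B)} all points are good, but for some coefficient $H_j$ the repeated-doubling Hankel test certifies a wrong term bound. The plan is to bound each mode by $\varepsilon/4$, where the key lever is the size $q^r$ of the extension field chosen in Step~4.

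For part~(A), I would apply Lemma \ref{the-13} with $L := R \cdot \LC_y(F_1) \cdot \LC_y(F_2)$, so that $\overrightarrow{\sigma}^i$ is a good point exactly when $L(\overrightarrow{\sigma}^i) \neq 0$. From Lemma \ref{the-14} and the trivial bounds on the leading coefficients, $\deg L \leq 2\|\s\|_\infty D^2 + 2D$. The crucial combinatorial observation is that the loop cannot run past $T = 2T_G$: since $\MoPrim(H) \approx G_{(\s,y)}$ (Lemma \ref{the-5}), we have $\sum_j \#H_j = \#H = T_G$, so $\max_j \#H_j \leq T_G$, and the doubling halts as soon as every Hankel determinant vanishes. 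Hence the total number of evaluation points is $N \leq 4T_G$. Viewing $L(\overrightarrow{\sigma}^i)$ as a polynomial in the random vector $\overrightarrow{\sigma}$ of degree at most $i \deg L$, Schwartz--Zippel (Lemma \ref{lm-6}) plus a union bound over $i$ gives
\[
\Pr[\text{(A)}] \;\leq\; \sum_{i=1}^{N} \frac{i \deg L}{q^r} \;\leq\; \frac{N^2 \deg L}{q^r}.
\]

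For part~(B), condition on the complement of (A), so that $c_{i,k} = H_k(\overrightarrow{\sigma}^i)$ for every collected value and we are exactly in the setting of Section~\ref{sec-et}. For each coefficient $H_k$ with $t_k := \#H_k$, the symbolic Hankel matrix $\HK_s$ is singular for $s > t_k$ and full rank for $s \le t_k$, with $\det\HK_s$ a polynomial in $\overrightarrow{\sigma}$ of degree at most $s^2 \deg H_k$. The repeated-doubling bound at the end of Section \ref{sec-et} gives, for each $k$, probability of premature termination at most $\tfrac{4 t_k^2 \deg H_k}{3 q^r}$. A union bound over the $\ell \leq \|\s\|_\infty D + 1$ coefficients, using $t_k \leq T_G$ and $\deg H_k \leq nd + \|\s\|_\infty D$, bounds $\Pr[\text{(B)}]$ by a quantity of the form $O\!\bigl(T_G^2 \cdot \|\s\|_\infty D \cdot (nd+\|\s\|_\infty D)/q^r\bigr)$. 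When this holds, the first doubling step $T$ at which $\det\HK_T$ vanishes satisfies $t_k < T \leq 2 t_k$, so the decremented output $T_k = T - 1$ of Step~9 obeys $t_k \leq T_k < 2 t_k$.

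Finally I would substitute the lower bound on $r$ from Step~4, which forces $q^r \geq 86\,(d+1)^{2n}(nd)^2 \|\s\|_\infty / \varepsilon$. Using $T_G \leq (d+1)^n$ and $D \leq nd$, the two probability bounds above each collapse to at most $\varepsilon/4$, totaling the claimed $\varepsilon/2$. The main obstacle is purely bookkeeping: the constants (the ``$86$'', the exponent $2n$ on $d+1$, and the factor $(nd)^2$) in the definition of $r$ are exactly what is needed to absorb the worst-case values of $\deg L$, $\#H_k$, $\deg H_k$, and $N$ simultaneously, and one must verify that the worst-case substitutions still fit inside these constants without slack that blows up with $n$ or $d$.
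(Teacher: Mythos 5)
Your proof plan is essentially the paper's argument: the paper likewise constructs a single condition polynomial $L_{\rm term}=\prod_{i=1}^{4t}L(x_1^i,\dots,x_n^i)\cdot\prod_{j=1}^{\ell-1}\Gamma_j$ (with $t=\max_j\#H_j$, $L=R\cdot\LC_y(F_1)\cdot\LC_y(F_2)$ as in Lemma~\ref{the-13}, and $\Gamma_j$ the product of the relevant Hankel determinants for $H_j$), bounds $\deg L_{\rm term}<43(d+1)^{2n}n^2d^2\|\s\|_\infty$, and invokes Schwartz--Zippel over $\F_{q^r}^{*n}$; your split into events (A) and (B) with a union bound is just that same estimate written as a sum rather than as a product of condition polynomials.

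One detail to fix so the constants in Step~4 actually absorb the bound: the coefficients $H_k$ lie in $\F_q[\X]$ (not $\F_q[\X,y]$), so $\deg H_k\le D+nd$, not $nd+\|\s\|_\infty D$ as you wrote. The factor $\|\s\|_\infty$ should enter the $(\mathrm{B})$-side bound exactly once, through $\ell-1\le\|\s\|_\infty D$; with your overestimate you pick up $\|\s\|_\infty^2$, which the exponent $r$ (which carries only a single $\log\|\s\|_\infty$ term) is not sized to cover. With $\deg H_k\le D+nd$ and the tighter $\sum_{i=1}^{4t}i$ in place of $N^2$ (and $t=\max_j\#H_j$ in place of $T_G$), the worst-case substitutions $t\le(d+1)^n$, $D\le nd$ land inside the $43$ and the rest follows as in the paper.
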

\proof
We use $\overrightarrow{\sigma}$ to test the term bounds of all the coefficients of $H$ w.r.t $y$. Assume $t_i=\#H_i$ and $t=\max \{t_i,i=1,\dots,\ell\}$. Without loss of generality, consider $H_1$.
We test if $\lfloor \log t_1\rfloor+2$ determinants are zero, whose orders are $2^0,2^1,\dots,2^{\lfloor \log t_1\rfloor+1}$.
Assume the corresponding determinants are ${\bf DetH}_i,i=0,1,\dots,\lfloor \log t_1\rfloor+1$, by \cite{KaltofenL03}, $\deg{\bf DetH}_i \leq (2^i)^2 (D+nd)$.
The success of early termination is decided by the selection of $\overrightarrow{\sigma}$ such that all ${\bf DetH}_i(\overrightarrow{\sigma}) \neq 0,i=0,1,\dots,\lfloor \log t_1\rfloor$. Multiply all of them into one polynomial $\Gamma_1:=\prod_{i=0}^{\lfloor \log t_1\rfloor}{\bf DetH}_i$, with degree
$\leq \sum_{i=0}^{\lfloor \log t_1\rfloor}2^{2i}(D+nd)\leq \frac{4t_1^2}{3}(D+nd)$. For the same reason, for each $H_i$, there exists a non-zero condition polynomial $\Gamma_i$, such that if $\Gamma_i(\overrightarrow{\sigma})\neq 0$. Then Step 9 returns a correct term bound $T_i$ for $H_i$. The degree of $\Gamma_i$ $\leq \frac{4t_i^2}{3}(D+nd)$.

As the evaluations of $H_i$ come from the images of GCD of $F_1$ and $F_2$, $\overrightarrow{\sigma}^i,i=1,2\dots,4t$ should all be good points for $F_1$ and $F_2$.

Let $R=\res_{y}(F_1/\gcd(F_1,F_2),F_2/\gcd(F_1,F_2))$.
Set $L=\LC_y(F_1)\cdot \LC_y(F_2)\cdot R\in \F_q[\X]$.
As $L$ is a non-zero polynomial, $L(x_1^i,\dots,x_n^i)$ is also a non-zero polynomial.
So $(\sigma_1^i,\dots,\sigma_n^i)$ is a good point if $(\sigma_1,\dots,\sigma_n)$ is not a zero of $L(x^i_1,\dots,x^i_n)$.
Define a non-zero polynomial
 $$L_{\rm term}=\prod_{i=1}^{4t}L(x_1^i,\dots,x_n^i)\prod_{i=1}^{\ell-1}\Gamma_i.$$
So Step 9 returns term bounds $T_i$ satisfying $\#H_i\leq T_i< 2\#H_i$ if $\overrightarrow{\sigma}$ satisfies $L_{\rm term}(\overrightarrow{\sigma})\neq 0$.
By Lemma \ref{the-14}, $\deg R\leq 2\|\mathbf{s}\|_{\infty}\deg A\deg B\leq 2\|\mathbf{s}\|_{\infty}D^{2}$, so we have $\deg L\leq\deg(\LC_y(A))+\deg(\LC_y(B))+\deg R\leq 2D+2\|\mathbf{s}\|_{\infty}D^2$.
So $\deg L(x_1^i,\dots,x_n^i)\leq i(2D+2\|\mathbf{s}\|_{\infty}D^2)$, which implies
 $\deg \prod_{i=1}^{4t}L(x_1^i,\dots,x_n^i)\leq \sum_{i=1}^{4t} i(2D+2\|\mathbf{s}\|_{\infty}D^2)=2t(4t+1)(2D+2\|\mathbf{s}\|_{\infty}D^2)$.
As $D\leq nd$ and $\ell\leq \|\mathbf{s}\|_{\infty}D+1$, we have $\deg L_{\rm term}\leq 2t(4t+1)(2D+2\|\mathbf{s}\|_{\infty}D^2)+\frac{4t^2}{3}(D+nd)(\ell-1)\leq 2t(4t+1)(2D+2\|\mathbf{s}\|_{\infty}D^2)+\frac{4t^2}{3}(D+nd)\|\mathbf{s}\|_{\infty}D<43n^2t^2d^2\|\mathbf{s}\|_{\infty}$.
Since $t\leq (d+1)^n$, $\deg L_{\rm term}< 43(d+1)^{2n}n^2d^2\|\mathbf{s}\|_{\infty}$.
Since $r\geq \frac{\log\frac{1}{\varepsilon}+\log 86+2n\log (d+1)+2\log (nd)+\log \|\mathbf{s}\|_{\infty}}{\log q}$, $q^r\geq \frac{86}{\varepsilon}(d+1)^{2n}n^2d^2\|\mathbf{s}\|_{\infty}$. So by Lemma \ref{lm-6}, Stage \uppercase\expandafter{\romannumeral2} returns correct term bounds with probability
$$\geq 1-\frac{\deg L_{\rm term}}{q^r-1}\geq 1-\frac{43(d+1)^{2n}n^2d^2\|\mathbf{s}\|_{\infty}-1}{\frac{86}{\varepsilon}(d+1)^{2n}n^2d^2\|\mathbf{s}\|_{\infty}-1}\geq 1-\frac{\varepsilon}{2}.$$
\qed

\begin{lemma}\label{lm-4}
In Stages \uppercase\expandafter{\romannumeral3}, \uppercase\expandafter{\romannumeral4}, \uppercase\expandafter{\romannumeral5} and \uppercase\expandafter{\romannumeral6}, if $T_i\geq \#G_i$, Algorithm \ref{alg-3} returns the correct $G=\gcd(A,B)$, with probability $\geq 1-\frac{\varepsilon}{2}$.
\end{lemma}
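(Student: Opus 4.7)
The plan is to control the failure probability of Stages III--VI by a union bound: for each way the algorithm can fail, I would identify a non-zero polynomial in the random parameters $\overrightarrow{\zeta}, \overrightarrow{\alpha}$ chosen in Step 11 whose vanishing detects the failure, bound the sum of their degrees, and apply the Schwartz-Zippel Lemma (Lemma \ref{lm-6}) over $\F_{q^m}$. The extension degree $m$ in Step 10 has been chosen precisely so that this aggregate degree is dominated by $q^m$, forcing the total failure probability to be $\leq \varepsilon/2$.

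There are three failure modes. First, Stage III must yield a diverse $\widetilde H_i := H_i(\overrightarrow{\zeta}\X)$ w.r.t.\ $y$ for every coefficient $H_i$ of $H$ in \eqref{eq-3}; writing $H_i = \sum_j c_{i,j} \X^{\mathbf{e}_{i,j}}$, this is governed by the non-vanishing of $\prod_{i}\prod_{j<k}(c_{i,j}\overrightarrow{\zeta}^{\mathbf{e}_{i,j}} - c_{i,k}\overrightarrow{\zeta}^{\mathbf{e}_{i,k}})$, a polynomial in $\overrightarrow{\zeta}$ of degree $O(\ell T^2 d)$. Second, $\overrightarrow{\alpha}$ must be a $2T$-$\omega$ good point for $\widetilde F_1, \widetilde F_2$ (Definition \ref{def-3}); by Lemma \ref{the-13} this reduces to the non-vanishing of $L := R \cdot \LC_y(\widetilde F_1) \cdot \LC_y(\widetilde F_2)$ at the $2T(n+1)$ power points $\overrightarrow{\alpha}^i, \overrightarrow{\alpha}_k^i$, and by Lemma \ref{the-14} the associated bad polynomial in $\overrightarrow{\alpha}$ has degree $O(n T^2 \|\mathbf{s}\|_\infty D^2)$. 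Third, Assumption \ref{ass-2}(2)--(3) must hold at $\overrightarrow{\alpha}$ for every $\widetilde H_i$ so that Algorithm \ref{alg-2} correctly recovers it in Step 15; since $\#\widetilde H_i \leq T$ and $\deg \widetilde H_i \leq D + nd$, the total degree of the monomial-separation polynomials is $O(n \ell T^2 (D + nd))$.

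Multiplying the three bad polynomials together gives a single $L_{\rm good}(\overrightarrow{\zeta}, \overrightarrow{\alpha})$ whose non-vanishing guarantees that Stages III--VI succeed. Using $T \leq (d+1)^n$, $\ell \leq \|\mathbf{s}\|_\infty D + 1$, and $D \leq nd$, I would bound $\deg L_{\rm good}$ below the threshold absorbed into the choice of $m$ in Step 10; Schwartz-Zippel then gives failure probability at most $\deg L_{\rm good}/(q^m - 1) \leq \varepsilon/2$. On the success event, Algorithm \ref{alg-2} returns each $\widetilde H_i$ correctly, un-diversification in Step 15 restores the true $H_i$, Step 16 identifies the monomial factor $\Delta$ (using $H_\ell = (x_1\cdots x_n)^d$, known a priori), and Step 17 outputs $\gcd(A,B)$.

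The main obstacle is bookkeeping rather than a conceptual issue: the diversification event is a condition on $\overrightarrow{\zeta}$ while the good-point and Ben-Or/Tiwari events are conditions on $\overrightarrow{\alpha}$, and the latter refer to $\widetilde F_1, \widetilde F_2$ which themselves depend on $\overrightarrow{\zeta}$. This coupling is harmless because the monomial supports of $\widetilde H_i$ coincide with those of $H_i$, so the degree bounds are unaffected, and the entire tally proceeds in a manner strictly analogous to Lemma \ref{lm-3}.
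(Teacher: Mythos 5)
Your proposal is correct and follows essentially the same route as the paper's proof: identify the three failure events (non-diversification of the $H_i$ via $\overrightarrow{\zeta}$, failure of $\overrightarrow{\alpha}$ to be a $2T$-$\omega$ good point via Lemma~\ref{the-13} and Lemma~\ref{the-14}, and violation of Assumption~\ref{ass-2}(2)--(3) for the interpolation), encode each as the vanishing of a nonzero polynomial in $(\overrightarrow{\zeta},\overrightarrow{\alpha})$, bound the aggregate degree, and apply Schwartz--Zippel over $\F_{q^m}$ with $m$ chosen in Step~10 to absorb that degree into $\varepsilon/2$. Your closing remark about the $\overrightarrow{\zeta}$--$\overrightarrow{\alpha}$ coupling in the good-point polynomial is exactly the point the paper handles by bounding $\deg_{\X}$ and $\deg_{\overrightarrow{\zeta}}$ of $R_{\overrightarrow{\zeta}}$ separately.
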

\begin{proof}
Once $H=\sum_{i=1}^{\ell}H_iy^{e_i}$ is computed correctly in Stage \uppercase\expandafter{\romannumeral5}, we can obtain the correct $G=\gcd(A,B)$. So we analyse the probability of obtaining correct $H_j$'s in Step 15.
We will prove  that
$H$ is diverse w.r.t. $y$ by $\overrightarrow{\zeta}$
and $\overrightarrow{\alpha}$ is a good point.

In our algorithm, $H$ should be diverse w.r.t. $y$. So
 a point $\overrightarrow{\zeta}=(\zeta_1,\dots,\zeta_n)$ should be chosen, such that $H_1(\overrightarrow{\zeta}\X),\dots,
H_{\ell-1}(\overrightarrow{\zeta}\X)$ are all diverse.
Considering $H_1$ and assume that $H_1=\Delta\cdot(c_1M_1+\cdots+c_uM_u)$, $\overrightarrow{\zeta}$ diversifying $H_1$ means
$\prod_{i\neq j} (c_iM_i(\overrightarrow{\zeta})-c_jM_j(\overrightarrow{\zeta}))\neq 0$.
Set $U_1=\prod_{i\neq j} (c_iM_i-c_jM_j)$. Then  $\overrightarrow{\zeta}$ diversifies $H_1$ if it is not a zero of $U_1$. For the same reason, we can set polynomials $U_2$,\dots,$U_{\ell-1}$ for $H_2,\dots,H_{\ell-1}$ and set $U=U_1\cdots U_{\ell-1}$. If $U(\overrightarrow{\zeta})\neq 0$, then the point $\overrightarrow{\zeta}$ satisfies our diversification condition.
Estimate the degree bound of $U$,
$$\deg U=\deg U_1+\cdots+\deg U_{\ell-1}\leq \frac{T(T-1)}{2}D(\ell-1)\leq \frac{T(T-1)}{2}D^2\|\mathbf{s}\|_{\infty}.$$

Now we evaluate the GCD in Step 13.
Consider point $(\alpha_1,\dots,\alpha_n)$. As $F_1$ and $F_2$ are diversified, consider the polynomial  $R_{\overrightarrow{\zeta}}=\res_{y}(\widetilde{F}_1/\gcd(\widetilde{F}_1,\widetilde{F}_2),\widetilde{F}_2/\gcd(\widetilde{F}_1,\widetilde{F}_2))$.
Set $Q=\LC_y(\widetilde{F}_1)\cdot \LC_y(\widetilde{F}_2)\cdot R_{\overrightarrow{\zeta}}\in \F_{q^m}[\X]$.
As $Q$ is a non-zero polynomial, $Q(x_1^i,\dots,x_n^i)$ is also a non-zero polynomial. So $(\alpha_1^i,\dots,\alpha_n^i)$ is a good point if $(\alpha_1,\dots,\alpha_n)$ is not a zero of $Q(x^i_1,\dots,x^i_n)$.
Clearly, $Q(x^i_1,\dots,\omega^i x^i_k,\dots,x_n^i)$ is a non-zero polynomial. For the same reason, $(\alpha^i_1,\dots,(\alpha_k\omega)^i,\dots,\alpha_n^i)$ is a good point if $(\alpha_1,\dots,\alpha_n)$ is not a zero of $Q(x^i_1,\dots,\omega^i x^i_k,\dots,x_n^i)$.
Define a non-zero polynomial $$Q_{\rm good}=\prod_{i=1}^{2T}Q(x_1^i,\dots,x_n^i)
\prod_{k=1}^n\prod_{j=1}^{2T}Q(x^j_1,\dots,x^j_k\omega^j,\dots,x^j_n).$$
So $(\alpha_1,\dots,\alpha_n)$ is a $2T$-$\omega$ good point for $\widetilde{F}_1,\widetilde{F}_2$ if and only if
$$Q_{\rm good}(\overrightarrow{\alpha},\overrightarrow{\zeta})\neq 0.$$
If $\overrightarrow{\zeta}$ are constants, then $\deg R_{\overrightarrow{\zeta}}\leq 2\|\mathbf{s}\|_{\infty}\deg A\deg B\leq 2\|\mathbf{s}\|_{\infty}D^{2}$. Now regard $\overrightarrow{\zeta}$ as variables. So $\deg_{\X} R_{\overrightarrow{\zeta}}\leq 2\|\mathbf{s}\|_{\infty}\deg_{\X} A\deg_{\X} B\leq 2\|\mathbf{s}\|_{\infty}D^{2}$ and $\deg_{\overrightarrow{\zeta}} R_{\overrightarrow{\zeta}}\leq 2\|\mathbf{s}\|_{\infty}\deg_{\overrightarrow{\zeta}} A\deg_{\overrightarrow{\zeta}} B\leq 2\|\mathbf{s}\|_{\infty}D^{2}$.
We thus have $\deg_{\X} Q\leq\deg_{\X}(\LC_y(\widetilde{F}_1))+\deg_{\X} R+\deg_{\X}(\LC_y(\widetilde{F}_2))\leq 2D+2\|\mathbf{s}\|_{\infty}D^2$ and $\deg_{\overrightarrow{\zeta}} Q\leq\deg_{\overrightarrow{\zeta}}(\LC_y(\widetilde{F}_1))+\deg_{\overrightarrow{\zeta}}(\LC_y(\widetilde{F}_2))+\deg_{\overrightarrow{\zeta}} R_{\overrightarrow{\zeta}}\leq 2D+2\|\mathbf{s}\|_{\infty}D^2$.
Then $\deg Q(x_1^i,\dots,x_n^i)\leq (2D+2\|\mathbf{s}\|_{\infty}D^2)+i\cdot (2D+2\|\mathbf{s}\|_{\infty}D^2)=(2D+2\|\mathbf{s}\|_{\infty}D^2)(i+1)$, which implies
\newline
 $\deg \prod_{i=1}^{2T}Q(x_1^i,\dots,x_n^i)\leq \sum_{i=1}^{2T} (i+1)(2D+2\|\mathbf{s}\|_{\infty}D^2)=T(2T+3)(2D+2\|\mathbf{s}\|_{\infty}D^2)$.
So $\deg Q_{\rm good}\leq (n+1)T(2T+3)(2D+2\|\mathbf{s}\|_{\infty}D^2)$.

Now we turn to Stage \uppercase\expandafter{\romannumeral5}. The correctness of $H_i$ comes from the correctness of the interpolation.
According to Assumption \ref{ass-2}, as $\deg H_i\leq D+nd$, $\overrightarrow{\alpha}$ should not vanish a polynomial with degree $\leq (n+1)\frac{T(T-1)}{2}(D+nd)(\ell-1)\leq (n+1)\frac{T(T-1)}{2}(D+nd)(\|\mathbf{s}\|_{\infty}D)$.
So the total degree of the three condition polynomials is $< 21(n+1)T^2n^2d^2\|\mathbf{s}\|_{\infty}$. As $m\geq \frac{\log\frac{1}{\varepsilon}+\log 42+\log (n+1)+2\log (ndT)}{\log q}$, $q^m\geq \frac{42}{\varepsilon}(n+1)T^2n^2d^2\|\mathbf{s}\|_{\infty}$.
By Lemma \ref{lm-6}, if $T$ is the upper bound of all $\#H_i$'s and the interpolation computes the correct polynomials with probability
$$\geq1-\frac{21(n+1)T^2n^2d^2\|\mathbf{s}\|_{\infty}-1}{q^m-1}\geq 1-\frac{\varepsilon}{2}.$$
\end{proof}

We now prove (1) of Theorem \ref{the-7}.
\begin{proof}
By Lemma \ref{lm-3}, $T_i$'s are upper bounds of $H_i$'s with probability $\geq 1-\frac{\varepsilon}{2}$.
By Lemma \ref{lm-4}, if each $T_i$ is an upper bound for $\#H_i$, then the interpolation algorithm computes the correct polynomials with probability $1-\frac{\varepsilon}{2}$.
So totally, Algorithm \ref{alg-3} returns the correct polynomials with probability $(1-\frac{\varepsilon}{2})^2\geq 1-\varepsilon$.
The correctness is proved.

We analyse the complexity. Here $T$ is the upper bound of $\#H_i$'s and $T_G=\sum_{i=1}^\ell \#H_i=\#G$

{\bf Stage} \uppercase\expandafter{\romannumeral1}: In Step 2, randomly choosing a vector $\mathbf{s}$ costs $O(n\log N)$ bit operations. In Step 3, computing $A_{(\mathbf{s},y)},B_{(\mathbf{s},y)}$ costs $O^\sim(n(T_A+T_B)(\log d+\log N))$ bit operations.
By Theorem \ref{the-4}, $\mathbf{s}$ is a suitable vector for $A$ or $B$ with probability $\geq \frac12$, so the expected cost is $O^\sim(n(T_A+T_B)(\log d+\log N))$ bit operations. Since $N\in O(\min\{T_A,T_B\})$, the expected cost is $O^\sim(n(T_A+T_B)\log d)$ bit operations

{\bf Stage} \uppercase\expandafter{\romannumeral2}:
In Step 6, we compute $F_1(y,\overrightarrow{\sigma}^{i})$ and $F_2(y,\overrightarrow{\sigma}^{i})$. As the partial degree of $F_1(y,\overrightarrow{\sigma}^i)$ is $O(d\|\mathbf{s}\|_\infty)$, the complexity is $O^\sim(nT_A\log d\log q^r+TT_A\log q^r)$ bit operations. Plus the cost for $F_2(y,\overrightarrow{\sigma})$, the total complexity is $O^\sim(n(T_A+T_B)\log d\log q^r+T(T_A+T_B)\log q^r)$ bit operations.
To compute the GCD of $F_1(y,\overrightarrow{\sigma}^i)$ and $F_2(y,\overrightarrow{\sigma}^i)$, the complexity is $O^\sim(TD\|\mathbf{s}\|_{\infty}\log q^r)$ bit operations.
In Step 8, to test the Hankel matrices, it costs $O^\sim(T\log q^r)$ bit operations.
So the complexity is $O^\sim(nT(T_A+T_B)\log d\log q^r+T(T_A+T_B)\log q^r+TD\min\{T_A,T_B\}\log q^r)$ bit operations.

Since $q^r$ is $O(\frac{1}{\varepsilon}d^{2n}n^2d^2\|\mathbf{s}\|_{\infty})$, the complexity is 
$O^\sim(n(T_A+T_B)\log D(n\log d+\log \frac{1}{\varepsilon})\log q+T(T_A+T_B)(n\log d+\log \frac{1}{\varepsilon})\log q+TD\min\{T_A,T_B\}(n+\log \frac{1}{\varepsilon})\log q)$ bit operations.

{\bf Stage} \uppercase\expandafter{\romannumeral3}: In Step 12, the cost is $O^\sim(n(T_A+T_B)\log d\log q^m)$ bit operations.

{\bf Stage} \uppercase\expandafter{\romannumeral4}: In Step 13, we compute $\widetilde{F}_1(y,\overrightarrow{\alpha}^{i})$, $\widetilde{F}_2(y,\overrightarrow{\alpha}^{i})$, $\widetilde{F}_1(y,\overrightarrow{\alpha}_k^{i})$ and $\widetilde{F}_2(y,\overrightarrow{\alpha}_k^{i})$. The complexity is $O^\sim(nT(T_A+T_B)\log d\log q^m)$ bit operations, which is $O^\sim(nT(T_A+T_B)(\log\frac{1}{\varepsilon}+\log d)\log d\log q)$ bit operations.
To compute the GCDs, the complexity is $O^\sim(nTD\|\mathbf{s}\|_{\infty}\log q^m)$ bit operations, which is $O^\sim(nTD\min\{T_A,T_B\}\log\frac{1}{\varepsilon}\log q)$ bit operations.

{\bf Stage} \uppercase\expandafter{\romannumeral5}:
In Step 15, for each interpolation of $H_i$, as $\deg_{x_j} H_i\leq2d$ for any $x_j,j=1,\dots,n$ and $\#H_i\leq T_i$,  by Theorem \ref{the-2}, the cost is $O^\sim(nT_i\log^2q^m+nT_i\sqrt{d}\log q)$ bit operations, which is $O^\sim(nT_i\log^2 d\log^2\frac{1}{\varepsilon}\log^2q+nT_i\sqrt{d}\log q)$ bit operations. So the total complexity is $O^\sim(nT_G\log^2 d\log^2\frac{1}{\varepsilon}\log^2q+nT_G\sqrt{d}\log q)$ bit operations.

Shoup \cite{shoup1994fast}  presented an algorithm to construct an irreducible polynomial of  degree $k$ over finite field $\F_q$ with an expected number of $O^\sim(k^2 +k \log q)$ operations in $\F_q$, which is $O^\sim(k^2\log q+k\log^2 q)$ bit operations.
So the complexity for constructing irreducible polynomials of degrees $m$ and $r$ is $O^\sim(n^2\log^2 (d\|\mathbf{s}\|_{\infty})\log^2\frac{1}{\varepsilon}\log q)$ bit operations.
Actually, by Lemma \ref{lm-5}, $nD(T_A+T_B)\geq n^2$. So the total complexity of our algorithm  is $O^\sim(nDT_G(T_A+T_B)\log^2\frac{1}{\varepsilon} \log^2 q)$ bit operations.

As in Step 1, we always let $q>D$. If $q<D$, we extend $\F_{q}$ to $\F_{q'}$ with $q'>D$. Finding a new primitive root costs $O(q'^{\frac{1}{4}+\epsilon})=O(D^{\frac{1}{4}+\epsilon})$ bit operations. So if $q\leq D$, we use $q'$ instead of $q$, the complexity is $O^\sim(nDT_G(T_A+T_B)\log^2\frac{1}{\varepsilon} \log^2 q'+n^2\log^2 D\log^2\frac{1}{\varepsilon}\log q')$ bit operations.
As $\log q'=\log D$, the cost is $O^\sim(nDT_G(T_A+T_B)\log^2\frac{1}{\varepsilon})$ bit operations.
So the cost is in $O^\sim(nDT_G(T_A+T_B)\log^2\frac{1}{\varepsilon}\log^2 q)$ bit operations.
\end{proof}

We now prove (2) of Theorem \ref{the-7}.
\proof

 We consider the worst case. Double the terms bound $T$, as the bad points for $A_{(\mathbf{s},y)}$ and $B_{(\mathbf{s},y)}$, $T$ becomes $O(d^n)$, the complexity for wrong computing is at most $O^\sim(nDd^n(T_A+T_B)\log^2\frac{1}{\varepsilon} \log^2 q)$ bit operations. But it happens only with probability $\leq \varepsilon$.  So the expected complexity is $$O^\sim((1-\varepsilon)nDT_G(T_A+T_B)\log^2\frac{1}{\varepsilon} \log^2 q+\varepsilon nDd^n(T_A+T_B)\log^2\frac{1}{\varepsilon} \log^2 q)$$
Now we choose $\varepsilon=\frac{1}{nDd^n(T_A+T_B)}$. Then the expected complexity is $O^\sim(n^3DT_G(T_A+T_B) \log^2 q)$ bit operations.
\qed

\section{Conclusion}
In this paper, we proposed a new method for computing sparse GCDs of multivariate polynomials. Our algorithm works for polynomials over any finite field. We map the multivariate polynomials into univariate ones which keeps the sparse structure. Then recover the target multivariate GCD via a variant of Ben-Or/Tiwari's interpolation algorithm over finite field.
We also give the explicit bit complexity for the algorithm, which is better than
that of Zippel's algorithm.
The algorithm is shown to be 1-3 orders of magnitude faster than the default Maple GCD codes for various benchmarks.
%
%
%


\begin{thebibliography}{10}

\bibitem{arnold2016sparse}
Andrew Arnold, \emph{Sparse polynomial interpolation and testing}, Phd Theis,
  University of Waterloo, 2016.

\bibitem{Ben-OrT88}
Michael Ben{-}Or and Prasoon Tiwari, \emph{A deterministic algorithm for sparse
  multivariate polynominal interpolation (extended abstract)}, Proceedings of
  the 20th Annual {ACM} Symposium on Theory of Computing, May 2-4, 1988,
  Chicago, Illinois, {USA} (Janos Simon, ed.), {ACM}, 1988, pp.~301--309.

\bibitem{brown1971euclid}
W.~Steven Brown, \emph{On euclid's algorithm and the computation of polynomial
  greatest common divisors}, Journal of the ACM (JACM) \textbf{18} (1971),
  no.~4, 478--504.

\bibitem{Collins67}
George~E. Collins, \emph{Subresultants and reduced polynomial remainder
  sequences}, J. {ACM} \textbf{14} (1967), no.~1, 128--142.

\bibitem{cuyt2008new}
Annie Cuyt and Wen-shin Lee, \emph{A new algorithm for sparse interpolation of
  multivariate polynomials}, Theoretical Computer Science \textbf{409} (2008),
  no.~2, 180--185.

\bibitem{giesbrechtD11}
Mark Giesbrecht and Daniel~S Roche, \emph{Diversification improves
  interpolation}, Proceedings of the 36th International Symposium on Symbolic
  and Algebraic Computation, 2011, pp.~123--130.

\bibitem{hu2016fast}
Jiaxiong Hu and Michael Monagan, \emph{A fast parallel sparse polynomial gcd
  algorithm}, Proceedings of the ACM on International Symposium on Symbolic and
  Algebraic Computation, 2016, pp.~271--278.

\bibitem{hu2021fast}
\bysame, \emph{A fast parallel sparse polynomial gcd algorithm}, Journal of
  Symbolic Computation \textbf{105} (2021), 28--63.

\bibitem{huang21}
Qiao-Long Huang, \emph{Sparse polynomial interpolation based on
  diversification}, Science China Mathematics (2021), 1--16.

\bibitem{kaltofen1988greatest}
Erich Kaltofen, \emph{Greatest common divisors of polynomials given by
  straight-line programs}, Journal of the ACM (JACM) \textbf{35} (1988), no.~1,
  231--264.

\bibitem{KaltofenL03}
Erich Kaltofen and Wen-shin Lee, \emph{Early termination in sparse
  interpolation algorithms}, Journal of Symbolic Computation \textbf{36}
  (2003), no.~3-4, 365--400.

\bibitem{kaltofen1990computing}
Erich Kaltofen and Barry~M Trager, \emph{Computing with polynomials given
  byblack boxes for their evaluations: Greatest common divisors, factorization,
  separation of numerators and denominators}, Journal of Symbolic Computation
  \textbf{9} (1990), no.~3, 301--320.

\bibitem{klivans2001randomness}
Adam~R Klivans and Daniel Spielman, \emph{Randomness efficient identity testing
  of multivariate polynomials}, Proceedings of the thirty-third annual ACM
  symposium on Theory of computing, 2001, pp.~216--223.

\bibitem{MosesY73}
Joel Moses and David Y.~Y. Yun, \emph{The {EZ} {GCD} algorithm}, Proceedings of
  the {ACM} annual conference, Atlanta, Georgia, USA, August 27-29, 1973
  (Irwin~E. Perlin and Thomas J.~McConnell Jr., eds.), {ACM}, 1973,
  pp.~159--166.

\bibitem{shoup1994fast}
Victor Shoup, \emph{Fast construction of irreducible polynomials over finite
  fields}, Journal of Symbolic Computation \textbf{17} (1994), no.~5, 371--391.

\bibitem{tang2018computing}
Min Tang, Bingyu Li, and Zhenbing Zeng, \emph{Computing sparse gcd of
  multivariate polynomials via polynomial interpolation}, Journal of Systems
  Science and Complexity \textbf{31} (2018), no.~2, 552--568.

\bibitem{Wang80b}
Paul~S. Wang, \emph{The {EEZ-GCD} algorithm}, {SIGSAM} Bull. \textbf{14}
  (1980), no.~2, 50--60.

\bibitem{Zippel79}
Richard Zippel, \emph{Probabilistic algorithms for sparse polynomials},
  Symbolic and Algebraic Computation, {EUROSAM} '79, An International
  Symposiumon Symbolic and Algebraic Computation, Marseille, France, June 1979,
  Proceedings (Edward~W. Ng, ed.), Lecture Notes in Computer Science, vol.~72,
  Springer, 1979, pp.~216--226.

\end{thebibliography}

\providecommand{\bysame}{\leavevmode\hbox to3em{\hrulefill}\thinspace}
\providecommand{\MR}{\relax\ifhmode\unskip\space\fi MR }
\providecommand{\MRhref}[2]{%
  \href{http://www.ams.org/mathscinet-getitem?mr=#1}{#2}
}
\providecommand{\href}[2]{#2}

\end{document}